\begin{document}
%\mainmatter
\title{Arithmetic Circuits:\\ The Chasm at Depth Four Gets Wider}

\author{Pascal Koiran\\
LIP\thanks{UMR 5668 ENS Lyon, CNRS, UCBL, INRIA.}, \'Ecole Normale Sup\'erieure de Lyon, Universit\'e de Lyon\thanks{This work was done during a visit to the Fields Institute and to the University of Toronto's Department of Computer Science.}\\
{\tt Pascal.Koiran@ens-lyon.fr} 
}

\date{\today}
\maketitle

\newtheorem{conjecture}{Conjecture}
\newtheorem{theorem}{Theorem}
\newtheorem{lemma}{Lemma}
\newtheorem{proposition}{Proposition}
\newtheorem{corollary}{Corollary}
\newtheorem{definition}{Definition}
\newtheorem{problem}{Problem}
\newtheorem{remark}{Remark}
\newtheorem{example}{Example}
\newtheorem{hypothesis}{Hypothesis}

\makeatletter
\def\@yproof[#1]{\@proof{ #1}}
\def\@proof#1{\begin{trivlist}\item[]{\em Proof#1.}}
\newenvironment{proof}{\@ifnextchar[{\@yproof}{\@proof{} 
}}{~$\Box$\end{trivlist}}
\makeatother

\newcommand\cc{\ensuremath{\mathbb{C}}}

\newcommand\vpspace{\ensuremath{\mathsf{VPSPACE}}}
\newcommand\pspace{\ensuremath{\mathsf{PSPACE}}}
\newcommand\vpzero{\ensuremath{\mathsf{VP}^0}}
\newcommand\rr{\ensuremath{\mathbb{R}}}
\newcommand\vp{\ensuremath{\mathsf{VP}}}
\newcommand\vpnb{\ensuremath{\mathsf{VP}_{\mathsf{nb}}}}
\newcommand\vpnbzero{\ensuremath{\mathsf{VP}^0_{\mathsf{nb}}}}
\newcommand\vnp{\ensuremath{\mathsf{VNP}}}
\newcommand\vnpzero{\ensuremath{\mathsf{VNP}^0}}
\newcommand\vnpnb{\ensuremath{\mathsf{VNP}_{\mathsf{nb}}}}
\newcommand\vnpnbzero{\ensuremath{\mathsf{VNP}^0_{\mathsf{nb}}}}
\newcommand\vpip{\ensuremath{\mathsf{V\Pi P}}}
\newcommand\vpipzero{\ensuremath{\mathsf{V\Pi P}^0}}
\newcommand\poly{\ensuremath{\mathsf{poly}}}
\newcommand\zz{\ensuremath{\mathbb{Z}}}
\newcommand\nn{\ensuremath{\mathbb{N}}}
\newcommand\fp{\ensuremath{\mathsf{FP}}}
\newcommand\p{\ensuremath{\mathsf{P}}}
\newcommand\pnu{\ensuremath{\mathbb{P}}}
\newcommand\np{\ensuremath{\mathsf{NP}}}
\newcommand\npnu{\ensuremath{\mathbb{NP}}}
\newcommand\nc{\ensuremath{\mathsf{NC}}}
\newcommand\nl{\ensuremath{\mathsf{NL}}}
\newcommand\logcfl{\ensuremath{\mathsf{LOGCFL}}}
\newcommand\per{\ensuremath{\mathrm{PER}}}
\newcommand\sharpp{\ensuremath{\mathsf{\sharp P}}}
\newcommand\pp{\ensuremath{\mathsf{PP}}}
\newcommand\gapp{\ensuremath{\mathsf{GapP}}}
\newcommand\gapppoly{\ensuremath{\mathsf{GapP/poly}}}
\newcommand\chpoly{\ensuremath{\mathsf{CH/poly}}}
\newcommand\ppoly{\ensuremath{\mathsf{P/poly}}}
\newcommand\ch{\ensuremath{\mathsf{CH}}}

\newcommand\bit{\ensuremath{\mathrm{Bit}}}

\newcommand\hit{\ensuremath{\cal H}}

\begin{abstract}
In their paper on the ``chasm at depth four'', 
Agrawal and Vinay have shown that
polynomials in $m$ variables of degree $O(m)$ 
 which admit arithmetic circuits of size $2^{o(m)}$ also admit
 arithmetic circuits of depth four and size $2^{o(m)}$. 
This theorem shows that for problems such
as arithmetic circuit lower bounds or black-box derandomization 
of identity testing, the case of depth four circuits is in a certain sense
the general case.

In this paper we show that smaller depth four circuits can be obtained
if we start from polynomial size arithmetic circuits.
For instance, we show that  if the permanent of 
$n \times n$ matrices has circuits of size polynomial in $n$, then it also
has depth 4 circuits of size $n^{O(\sqrt{n}\log n)}$.
If the original circuit uses only integer constants of polynomial size, 
then the same is true of the resulting  depth four circuit.
% also uses integer constants of polynomial size.
These results have potential applications to lower bounds and deterministic 
identity testing, in particular for sums of products of sparse univariate 
 polynomials.
We also use our techniques to reprove two results on:
\begin{itemize}
\item[-] The existence of nontrivial boolean circuits of constant depth for
languages in $\logcfl$.
\item[-]  Reduction to polylogarithmic depth
for arithmetic circuits of polynomial size and polynomially bounded degree.
\end{itemize}
\end{abstract}

\newpage
\section{Introduction}

Agrawal and Vinay have shown that
polynomials of degree $d=O(m)$ in $m$ variables
 which admit nontrivial arithmetic circuits also admit nontrivial
 arithmetic circuits 
of depth four~\cite{AgraVinay08}. Here, ``nontrivial'' means of size 
$2^{o(d+d\log{m \over d})}$. The resulting depth 4 circuits are $\sum \prod \sum \prod$ arithmetic formulas: the output gate (at depth 4) and the gates at depth 2 are addition gates,  and the other gates are multiplication gates.
This theorem shows that for problems such
as arithmetic circuit lower bounds or black-box derandomization 
of identity testing, the case of depth four circuits is in a certain sense
the general case.

But what if we start from arithmetic circuits of size 
smaller than $2^{o(m)}$ 
(for instance, of size polynomial in $m$)?
It is reasonable to expect that the size of the corresponding 
depth four circuits will be reduced accordingly, but such a result
cannot be found in~\cite{AgraVinay08}.
One of the main results of this paper is a depth reduction theorem for $\vp$ families (i.e., families $(f_n)$ of polynomials of degree
and arithmetic circuit complexity polynomially bounded in $n$).
We show in Theorem~\ref{vp2depth4} that any $\vp$ family $(f_n)$ has
depth 4 arithmetic formulas of size $n^{O(\sqrt{d_n}\log d_n)}$, where $d_n$ is the degree of $f_n$. For instance, this result shows that if the permanent of 
$n \times n$ matrices has circuits of size polynomial in $n$, then it also
has depth 4 formulas of size $n^{O(\sqrt{n}\log n)}$.
This is potentially useful for a lower bound proof:
to show that the permanent does not have polynomial size circuits, we ``only''
have to show that it does not have depth 4 formulas of size  
$n^{O(\sqrt{n}\log n)}$. 
This is still certainly far away from the known lower bounds for constant depth arithmetic circuits: currently we have superpolynomial lower bound for the
permanent for circuits of depth 3 only, and only in finite fields~\cite{GriKar98,GriRaz00}. 
In the restricted setting of multilinear arithmetic circuits, superpolynomial lower bounds
can be obtained for circuits of arbitrary constant depth~\cite{RY09}.
We do not address the issue of multilinearity in this paper. 
Note however  that
the results in~\cite{RY08,RY09} suggest that the bound in Theorem~\ref{vp2depth4}
could be fairly close to optimal at least for multilinear circuits.
Indeed, a polynomial $f$ of degree $3n-1$ in $O(n^3)$ variables
with multilinear arithmetic circuits of polynomial size
is constructed in Section~4 of~\cite{RY08}.
%(the coefficients of $f$ are in a certain field of rational functions).
By Theorem~4.3 of~\cite{RY08} and Theorem~5.1 of~\cite{RY09},
 all multilinear depth~4 circuits for $f$ are of size 
at least $n^{\Omega(\sqrt{n/\log(n)})}$. 
This shows that the exponent $\sqrt{d_n}$ in Theorem~\ref{vp2depth4} 
cannot be removed if we insist on a reduction to depth 4 that would preserve
multilinearity. Note that for reduction to depth $\log^2(n)$, preservation
of multilinearity is indeed possible~\cite{RY08}.

%As a second contribution 
We also perform an analysis of the size of the integer 
constants used by the depth 4 circuit simulating a given polynomial size
circuit (a similar analysis for the construction in~\cite{AgraVinay08} 
%is probably feasible, but 
has not been carried out yet to the author's knowledge). 
Roughly speaking, we show that reduction to depth 4 does not require
the introduction of large constants. In particular, 
we give in Theorem~\ref{vpzero2depth4} an analogue of
Theorem~\ref{vp2depth4} for $\vp^0$ (this is a constant-free version of 
$\vp$). This result is used in~\cite{Koi10a}, where we show that black-box derandomization of identity testing for sums of products of sparse univariate polynomials with sparse coefficients would imply a lower bound for the permanent.
Finally, we give applications of our depth reduction techniques to boolean
circuit complexity and to the construction of arithmetic circuits of polylogarithmic depth.

\subsection{Main Ideas and Comparison with Previous Work}

The main depth reduction result in~\cite{AgraVinay08} is as follows.
\begin{theorem} \label{AgraVinay}
Let $P(x_1,\ldots,x_m)$ be a polynomial of degree $d=O(m)$ over a field $F$.
If there exists an arithmetic circuit of size $2^{o(d+d\log{m \over d})}$
for $P$ then there exists a depth 4 arithmetic circuit of size 
$2^{o(d+d\log{m \over d})}$. 
\end{theorem}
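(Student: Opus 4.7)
My plan is to follow a ``split in the middle'' strategy: first reduce the depth of the given circuit so that its multiplicative structure becomes shallow, and then cut it at half that depth. The first step is the classical Valiant--Skyum--Berkowitz--Rackoff (VSBR) depth-reduction, which transforms the size-$s$ circuit into an equivalent circuit $C'$ of depth $O(\log d)$ and size $s'$ polynomial in $s$ and $d$, with multiplication gates of fan-in two and alternating layers of $+$ and $\times$. A gate at multiplicative depth $k$ in such a circuit computes a polynomial of degree at most $2^k$, so choosing the threshold $k_0=\tfrac12\log d$ exhibits a collection of ``intermediate'' polynomials $Q_1,\dots,Q_t$ (with $t\le s'$), each of degree at most $\sqrt d$, from which $P$ is computed by the top half of $C'$, itself a circuit of multiplicative depth $\tfrac12\log d$ and total depth $O(\log d)$.

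Next, I would expand the top half as a sum of products of the $Q_i$. Because the multiplicative depth above the cut is $\tfrac12\log d$ and multiplication fan-in is two, every product arising in this expansion has at most $\sqrt d$ factors drawn from $\{Q_1,\dots,Q_t\}$, giving a representation
\[
P\;=\;\sum_{i} c_i\,\prod_{j=1}^{\ell_i} Q_{\sigma_i(j)},\qquad \ell_i\le\sqrt d.
\]
Finally, every $Q_i$ is a polynomial of degree at most $\sqrt d$ in $m$ variables, and so can be written as a sum of at most $N\le (e(m+\sqrt d)/\sqrt d)^{\sqrt d}$ monomials; making each monomial an explicit product of variables with a constant coefficient yields a $\sum\prod$ formula for $Q_i$. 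Substituting these into the outer $\sum\prod$ produces a $\sum\prod\sum\prod$, that is, a depth-$4$, formula for $P$.

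It remains to check the size bound. Using $d=O(m)$, one computes $\log N=O(\sqrt d\,\log(m/\sqrt d))=o(d+d\log(m/d))$, so the inner $\sum\prod$ blocks contribute a factor $2^{o(d+d\log(m/d))}$, which is absorbed by the target bound. The delicate point is to bound the number of summands of the outer sum. A brute-force sum-of-products expansion of the top half of $C'$ yields roughly $s'^{\,O(\log d)}$ terms, which is too large by a logarithmic factor in the exponent. This is the main obstacle, and its resolution is the technical heart of the proof: one must exploit the sharing of subcircuits guaranteed by the VSBR construction, perhaps by tailoring the cut depth to the parameter regime $d=O(m)$ and by identifying equal sub-products, so that only $2^{o(d+d\log(m/d))}$ essentially distinct products actually appear in the expansion of the top half.
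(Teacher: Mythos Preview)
This theorem is not proved in the present paper; it is quoted from~\cite{AgraVinay08}, and only the short derivation of Corollary~\ref{reduction} is supplied here. The paper is in fact explicit that its own machinery does \emph{not} recover Theorem~\ref{AgraVinay}: the best it obtains from a size-$t$, degree-$d$ circuit is a depth-four circuit of size $t^{O(\sqrt d\,\log d)}$ (Theorem~\ref{circuit2depth4}), and it remarks that ``this does not imply Theorem~\ref{AgraVinay}.''

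So the gap you flag at the end is genuine, and it is exactly the obstacle this paper does not overcome. Your overall plan---VSBR, then cut at multiplicative depth $\tfrac12\log d$, then expand both halves---is the right intuition and is close in spirit both to the branching-program route taken here and to the Agrawal--Vinay argument. But a generic expansion of the top half of a black-box VSBR circuit blows up too much. Your $s'^{\,O(\log d)}$ estimate for the outer sum is in fact optimistic: with $+$-fan-in $f\le s'$ and $k\approx\tfrac12\log d$ product layers, the recursion $T(k)=f\cdot T(k-1)^2$ for the number of summands gives $T(k)=f^{2^k-1}\approx s'^{\sqrt d}$, not $s'^{\,O(\log d)}$. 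Either way, once $\log s'=o(d+d\log(m/d))$ the exponent picks up a factor growing with $d$ and the target bound is lost. The fix in~\cite{AgraVinay08} is not a post-hoc identification of repeated products in a black-box depth-reduced circuit; it exploits the specific recursive structure of the AJMV reduction to control the fan-ins of both $\Sigma$-levels simultaneously, an argument the present paper does not reproduce and your outline does not supply.
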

Theorem 2.4 in~\cite{AgraVinay08} also provides some bounds on the fan-in
of the gates in the resulting depth 4 circuits.

For multilinear polynomials, 
their result (Corollary~2.5 in~\cite{AgraVinay08}) 
reads as follows:
\begin{corollary} \label{reduction}
A multilinear polynomial in $m$ variables which has an arithmetic circuit of size $2^{o(m)}$ 
also has a depth 4 arithmetic circuit of size $2^{o(m)}$.
\end{corollary}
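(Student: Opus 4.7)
The plan is to deduce the corollary from Theorem~\ref{AgraVinay} by a dichotomy on the degree $d$ of the multilinear polynomial $P$. Multilinearity gives $d\leq m$, so $d = O(m)$ is automatic, and a short derivative calculation shows that the quantity $\phi(d) := d + d\log(m/d)$ from Theorem~\ref{AgraVinay} is bounded on $[0,m]$ by $m(1+1/e)$, attained at $d=m/e$.

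In the \emph{linear-degree} regime $d = \Omega(m)$ one has $\phi(d) \geq d = \Omega(m)$ and hence $\phi(d) = \Theta(m)$. The assumed circuit of size $2^{o(m)}$ is therefore also of size $2^{o(\phi(d))}$, and Theorem~\ref{AgraVinay} delivers a depth 4 circuit of size $2^{o(\phi(d))} = 2^{o(m)}$, which is exactly what the corollary demands.

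In the \emph{sublinear-degree} regime $d = o(m)$ I would bypass Theorem~\ref{AgraVinay} entirely and use sparsity. A multilinear polynomial of degree at most $d$ in $m$ variables has at most $\sum_{i=0}^{d}{m \choose i} \leq (d+1)(em/d)^d = 2^{O(\phi(d))}$ distinct monomials (for $d\leq m/2$, using unimodality of the binomial coefficients and Stirling). The elementary limit $\varepsilon(1+\log(1/\varepsilon)) \to 0$ as $\varepsilon \to 0^+$, applied with $\varepsilon = d/m$, gives $\phi(d) = o(m)$, so the monomial count is $2^{o(m)}$; expanding $P$ as a sum of its monomials then produces a depth 2 (hence depth 4) formula of size $m\cdot 2^{o(m)} = 2^{o(m)}$.

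Combining the two constructions per $m$ — Theorem~\ref{AgraVinay} in the linear regime, monomial expansion in the sublinear regime — yields the claimed depth 4 family of size $2^{o(m)}$. The only genuinely subtle point is the asymptotic identity $\phi(d) = o(m)$ whenever $d = o(m)$: this is what makes sparsity take over in precisely the regime where Theorem~\ref{AgraVinay}'s bound $2^{o(\phi(d))}$ becomes strictly stronger than the given assumption $2^{o(m)}$, so the two regimes are complementary and cover every $m$.
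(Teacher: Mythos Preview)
Your argument is essentially correct but takes a genuinely different route from the paper. The paper sidesteps the case analysis with a one-line trick: for $d<m$, set $Q = P + \prod_{i=1}^m X_i$. Then $\deg Q = m$, so the exponent $d+d\log(m/d)$ in Theorem~\ref{AgraVinay} equals $m$; the hypothesis $2^{o(m)}$ on $P$ (hence on $Q$) is now exactly the hypothesis of Theorem~\ref{AgraVinay}, which produces a depth~4 circuit of size $2^{o(m)}$ for $Q$, and subtracting $\prod_i X_i$ costs only $m$ more gates. Artificially boosting the degree to $m$ collapses your two regimes into a single uniform application of Theorem~\ref{AgraVinay}. Your approach, by contrast, handles small $d$ via sparsity---a pleasant observation, since it shows that in the regime $d=o(m)$ the circuit-size hypothesis is not even used.

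There is one imprecision worth flagging. Your claim that the regimes $d=\Omega(m)$ and $d=o(m)$ ``cover every $m$'' is not literally true: these conditions are not complementary for arbitrary sequences (take $d_m$ alternating between $m$ and $\lfloor\sqrt m\rfloor$), and since Theorem~\ref{AgraVinay} is stated only asymptotically you cannot invoke it ``per $m$''. The routine repair is a subsequence argument: if the conclusion failed along some subsequence, pass to a further subsequence on which $d_m/m$ either tends to $0$ (your sparsity bound then gives depth~4 size $2^{o(m)}$) or stays bounded below by a positive constant (then $\phi(d_m)=\Theta(m)$ and Theorem~\ref{AgraVinay} applies). As a minor aside, the claimed maximiser $d=m/e$ and value $m(1+1/e)$ for $\phi$ are not right in either base of logarithm, but you only need the crude bound $\phi(d)=O(m)$, which holds regardless.
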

We give the (simple) proof, which is omitted from~\cite{AgraVinay08}.
For $d=m$ the result is clear since the exponent $d+d\log{m \over d}$ 
in Theorem~\ref{AgraVinay} is equal to $m$.
Consider now the case of a polynomial $P(X_1,\ldots,X_m)$ of degree $d<m$,
having a circuit of size $2^{o(m)}$. Let $Q=P+\prod_{i=1}^m X_i$. 
Since the number of variables of $Q$ is equal to its degree, we are back to 
the first case: $Q$ has a depth four circuit of size $2^{o(m)}$.
We can obtain a circuit of size $2^{o(m)}$ for $P$ by subtracting the product
$\prod_{i=1}^m X_i$ (this requires only $m$ additional arithmetic operations).
%\end{proof}
Note that this corollary and its proof hold more generally for any 
(possibly not multilinear) polynomial
of degree $d \leq m$.

By specializing the multilinear polynomial to the permanent, 
Agrawal and Vinay then state in Corollary~2.6 that if every depth 4 
arithmetic circuit for the permanent requires exponential size,
the same is true for arithmetic circuits of unbounded depth.
It is not made precise in~\cite{AgraVinay08} what ``exponential size'' 
exactly means. 
In this context (arithmetic complexity of the permanent) 
the most standard interpretation is probably that an exponential size
circuit for the $n \times n$ permanent is of size $2^{\Omega(n)}$
(note that the number of variables is $m=n^2$).
With this interpretation, it is not clear why Corollary 2.6 
of~\cite{AgraVinay08} would follow from Theorem~\ref{AgraVinay} or Corollary~\ref{reduction}.

Since the permanent of a $n \times n$ matrix has degree $d=n$ and 
$m=n^2$ variables, we can deduce the following from Theorem~\ref{AgraVinay}:
If there exists an arithmetic circuit of size $2^{o(n \log n)}$
for the $n \times n$ permanent  
then there exists also a depth 4 arithmetic circuit of size 
$2^{o(n \log n)}$. 
This statement is not very useful since we already know
(by Ryser's formula~\cite{ryser}) that the permanent has depth 3 arithmetic 
formulas of size $O(n2^n)$.
%On the positive side, we can conclude that depth 4 formulas
%of nontrivial (i.e., $2^{o(n \log n)}$) size for the $n \times n$ permanent
%actually exist. Note that the expansion of the permanent as a sum 
%of $n!$ monomials yields a depth 2 formula of size $2^{\Theta(n \log n)}$.
Note that applying Corollary~\ref{reduction} directly to the permanent
would give an even worse bound (namely, we would obtain depth 4 formulas 
of size $2^{o(n^2)}$).
As explained earlier, we can show that if the permanent has polynomial size 
circuits it must
also have depth 4 formulas of size $n^{O(\sqrt{n}\log n)}$.
This result does not follow from Theorem~\ref{AgraVinay}.
On the other hand, our results are weaker than Theorem~1  if we start from
a very large circuit. Indeed, as explained below, we can only show that 
a circuit of size $t$ and degree $d$ has an equivalent depth 4 circuit 
of size $t^{O(\sqrt{d} \log d)}$. This does not imply Theorem~\ref{AgraVinay}.

Before describing their general depth reduction algorithm, Agrawal and Vinay
begin with the special case of matrix powering. For this problem there is
a very simple and elegant reduction to depth four. 
Then they treat the  general case with an apparently different approach:
their construction builds on the depth reduction algorithm of Allender, Jiao, Mahajan and Vinay~\cite{AJMV98}, 
who gave a uniform version of the depth reduction result due to Valiant, Skyum, Berkowitz and Rackoff~\cite{VSBR83}. 
In this paper we show that the matrix powering idea is powerful enough to 
handle arbitrary polynomial-size arithmetic circuits. 
Arithmetic branching programs and weakly skew circuits
are the main tools that we use to reduce the evaluation 
of arbitrary arithmetic circuits to matrix powering.
These models are known to capture the complexity of a number
a problems from linear algebra such as e.g. matrix powering, iterated matrix multiplication or computation of the determinant~\cite{Toda92,MP08}.

\subsection{Organization of the paper}

In Section~\ref{arith} we present the two main computation models 
that we will use: arithmetic circuits and arithmetic branching programs.
We define some of the corresponding complexity classes, and give
some basic properties.
In Section~\ref{tobp}, building on a construction of Malod and Portier~\cite{MP08}  we give an efficient simulation of arithmetic
circuits by arithmetic branching programs. 
Compared to~\cite{MP08}, we take extra care to construct branching programs
of small depth because the square root of the depth appears in 
the exponent of the size estimate for the final depth 4 circuit.
In section~\ref{todepth4} we reduce branching programs to depth 4 circuits
using the matrix powering idea from~\cite{AgraVinay08}.
Then we state our main technical result in Theorem~\ref{circuit2depth4}.
We show in particular that an arithmetic circuit of size $t$ and formal degree $d$ has a depth 4 circuit of size $t^{O(\sqrt{d} \log d)}$.
We draw some consequences for depth reduction of $\vp$ families in Section~\ref{vp}, and for depth reduction of $\vpzero$ families in Section~\ref{vp0}.

In Section~\ref{logcfl} we give an application of these techniques to 
boolean circuit complexity. Namely, we show that languages in $\logcfl$
have constant-depth boolean circuits of size $2^{n^{\epsilon}}$
(and we briefly present the history of this result).

Finally, we show in Section~\ref{polylog} that the same tools can be used
to give a very simple (but suboptimal) proof of the fact that
for circuits of polynomially bounded size and degree, 
reduction to polylogarithmic depth can be achieved while preserving
polynomial size~\cite{VSBR83}.

\section{Arithmetic Circuits and Branching Programs} \label{arith}

We recall that an  arithmetic circuit contains addition and multiplication gates. In addition to these arithmetic gates there are input gates,
labelled by variables or constants from some field $K$.
An output gate is of fan-out zero. We often assume that there
is a single ouptut gate.
In this case an arithmetic circuit therefore represents a polynomial 
with coefficients in $K$.
Without loss of generality, we can and will assume  that every input gate has fan-out at most 1 (several input gates
can be labeled with the same variable or constant if necessary).

We often assume that the arithmetic gates have arity 2,
but in constant-depth circuits we naturally  allow addition and multiplication gates of unbounded fan-in (we often also some explicit upper bounds
on the fan-in, see for instance Theorem~\ref{circuit2depth4}).
In some of our intermediate constructions (e.g. Proposition~\ref{tows}) 
we also work with weighted addition
gates.
\begin{definition} \label{weighted}
A $n$-ary weighted addition gate computes a linear combination 
$a_1x_1+\cdots+a_nx_n$ of its inputs $x_1,\ldots,x_n$. Here $a_i$ is the weight associated to the $i$-th input of the gate. The total weight of the gate is 
$\sum_{i=1}^n |a_i|$.
\end{definition}
For instance, a subtraction gate is a binary weighted addition gate with 
weights $(1,-1)$.  We sometimes refer to binary unweighted addition gates as ``ordinary addition gates''.
The size of a circuit is its total  number of gates (including input gates).
\begin{definition}
Fix a field $K$.
A sequence $(f_n)$ of polynomials with coefficients in $K$
belongs to $\vp$ if there exists a polynomial $p(n)$ 
and a sequence $(C_n)$ of arithmetic circuits 
such that $\deg(f_n) \leq p(n)$, 
$C_n$ computes $f_n$ and is of size at most $p(n)$.
\end{definition}
The size constraint implies in particular that $f_n$ depends on
polynomially many variables.
The above definition is fairly robust. For instance we obtain the same class
with circuits using gates of fan-in 2 or of unbounded fan-in, 
weighted or unweighted addition gates.

An arithmetic formula is a circuit where all gates are of fan-out one, except
of course the output gate. In the constant depth setting, arithmetic formulas
and arithmetic circuits are polynomially related (\cite{RY09}, Claim~2.2).

The complexity of several problems from linear algebra
such as iterated matrix multiplication or computing the determinant 
is captured by a restricted class of arithmetic circuits called 
{\em weakly skew circuits}~\cite{Toda92,MP08}.
% (there is also an essentially
%equivalent notion of {\em skew circuit}~\cite{Toda92,Jansen08,KaKoi08}).
Let $C$ be an arithmetic circuit where all multiplication gates are binary.
A multiplication gate $\alpha$ in $C$ is said to be disjoint if at least
one of its two subcircuits is disjoint from the remainder of $C$, 
except of course for the edge from the subcircuit to~$\alpha$ 
(removing this edge would therefore disconnect $C$).
The circuit is weakly skew if its multiplication gates are all disjoint.
This definition is usually given only for circuits where all addition gates
are binary unweighted, but we will use our slightly more general definition
instead (see Propositions~\ref{tows} and~\ref{skew2abp}).

There is also a closely related notion of {\em skew circuits}~\cite{Toda92,Jansen08,KaKoi08}: 
a circuit with binary multiplication gates 
is skew if for every multiplication gate at least one of the two incoming edges
comes from an input of the circuit. 
%This is clearly a proper subclass of 
%the class of weakly skew circuits. 
Since we have assumed that that input gates have fan-out at most 1, every
skew circuit is also weakly skew.

A circuit where the only constants are from the set $\{0, -1, 1\}$ is said to 
be constant-free.
% (in such a circuit one can even assume that $-1$ is the only
%constant, and that there are no subtraction gates).
A constant-free circuit represents a polynomial in $\zz[X_1,\ldots,X_n]$, where
$X_1,\ldots,X_n$ are the variables labelling the input gates.

The constant-free model was systematically studied by Malod~\cite{Malod03}.
In particular, he defined a class $\vp^0$ of polynomial families
that are ``easy to compute'' by constant-free arithmetic circuits.
First we need to recall the notion of formal degree:
%\footnote{Our formal degree is called {\em degr\'e formel complet} 
%in~\cite{Malod03}.}
%Non parce qu'avec les portes avec poids on capture aussi 
%le degre formel non complet!
\begin{itemize}
\item[(i)] The formal degree of an input gate is equal to 1.
\item[(ii)]
The formal degree of an addition gate is the maximum of
the formal degrees of its incoming gates, and the  formal degree
of a multiplication gate is the sum of these formal degrees.
\end{itemize}
Finally, the formal degree of a circuit is equal to the formal degree
of its output gate.
This is obviously an upper bound on the degree of the polynomial
computed by the circuit.
Note that this definition can be applied to circuits with weighted addition gates of arbitrary fan-in. For instance, the polynomial $x-2y$ can be computed by a circuit containing one ordinary addition gate, one multiplication gate 
and three inputs labeled by $x$, $y$ and the constant $-2$. 
This circuit has formal degree two. 
The same polynomial can be computed by another 
circuit containing a binary weighted adition gate (of total weight $1+|-2|=3$)
with inputs $x$ and $y$. The second circuit has formal degree 1.
\begin{definition}
A sequence $(f_n)$ of polynomials 
belongs to $\vp^0$ if there exists a polynomial $p(n)$ 
and a sequence $(C_n)$ of
constant-free arithmetic circuits (with unweighted addition gates)
such that $C_n$ computes $f_n$ and is of size 
and formal degree at most $p(n)$.
\end{definition}
 The constraint on the formal degree 
forbids the computation of polynomials of high degree such as e.g. $X^{2^n}$;
it also forbids the computation of large constants such as $2^{2^n}$.
The class $\vp^0$ is therefore a strict subset of $\vp$ (over the field of rational numbers, or more generally any field of characteristic 0).
As for $\vp$ we obtain the same class with gates of fan-in 2 or of unbounded fan-in, but of course we cannot allow addition gates with arbitrary weights.
We can however allow subtraction gates:
\begin{proposition} \label{nosub}
Let $C$ be a
constant-free circuit of size $t$ and  formal
degree $d$, where the arithmetic gates are multiplication, 
unweighted addition or subtraction gates (all of fan-in 2).

There is an equivalent constant-free circuit $C'$ of formal
degree $d+1$ and size at most $6t+3$, 
where the arithmetic gates are binary multiplications 
or ordinary additions.
\end{proposition}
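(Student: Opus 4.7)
The plan is to eliminate subtractions by building, for each gate $g$ of $C$, a pair $(g^+, g^-)$ of gates in $C'$ satisfying $g = g^+ - g^-$ as polynomials, with $g^+$ and $g^-$ both computable using only ordinary additions and binary multiplications on constant-free inputs from $\{0,1\} \cup \{x_1,\ldots,x_n\}$. Inductively along the DAG of $C$: an input labeled by a variable $x$ gives $(g^+, g^-) := (x, 0)$; an input labeled $c \in \{-1,0,1\}$ gives $(\max(c,0), \max(-c,0))$; for $g = a + b$, take $(a^+ + b^+,\ a^- + b^-)$; for $g = a - b$, take $(a^+ + b^-,\ a^- + b^+)$; and for $g = a \cdot b$, expand $(a^+ - a^-)(b^+ - b^-)$ to obtain $(a^+ b^+ + a^- b^-,\ a^+ b^- + a^- b^+)$, using 4 binary multiplications and 2 binary additions. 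The output gate of $C'$ then computes $g_{\mathrm{out}}^+ + (-1)\cdot g_{\mathrm{out}}^-$, at the cost of 3 additional gates: one input labeled $-1$, one multiplication, and one addition.

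Next I verify the formal-degree bound $d^+(g), d^-(g) \leq d(g)$ for every gate $g$ by induction along $C$. The only nontrivial case is multiplication, which works precisely because both monomial summands $a^+ b^+$ and $a^- b^-$ have formal degree at most $d(a) + d(b) = d(g)$ (and similarly for $g^-$). Consequently the final combination $g_{\mathrm{out}}^+ + (-1) \cdot g_{\mathrm{out}}^-$ has formal degree $\max(d,\, 1+d) = d+1$, as claimed.

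The main bookkeeping subtlety is the fan-out-1 constraint on input gates of $C'$. When an input gate $g$ of $C$ feeds into a multiplication, the gates $g^+$ and $g^-$ of $C'$ are each used twice in $C'$ (once in the positive product, once in the negative product), so each must be duplicated into two separate input gates. This gives at most $4$ gates per input of $C$ (and only $2$ otherwise), $2$ gates per original addition or subtraction, and $6$ gates per original multiplication; summing over the $t$ gates of $C$ bounds the total by $6t$, and the $3$ final gates bring the grand total to at most $6t + 3$.
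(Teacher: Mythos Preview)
Your proof is correct and follows essentially the same approach as the paper: represent each gate by a pair $(g^+,g^-)$ with $g=g^+-g^-$, propagate the pair through additions, subtractions, and multiplications via the obvious identities, and recover the output as $g^+_{\mathrm{out}}+(-1)\cdot g^-_{\mathrm{out}}$ at the cost of one extra unit of formal degree and three extra gates. Your treatment is in fact slightly more careful than the paper's on two minor bookkeeping points: you explicitly handle constant inputs (mapping $-1$ to $(0,1)$ rather than $(-1,0)$), and you account for the fan-out-$1$ convention on input gates when they feed into a multiplication.
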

\begin{proof}
We need to get rid of subtraction gates. A first idea would be to write each
subtraction $x-y$ as $x+(-1)\times y$, but the cumulative effect of the multiplications $(-1)\times y$ could lead to an increase in the formal degree by more than 1. Instead we will represent each gate $\alpha$ in $C$ by a pair 
of gates $(\alpha_1,\alpha_2)$ in $C'$. The output of $\alpha$ will be equal to the differences of the outputs of $\alpha_1$ and $\alpha_2$.
An input $x$ in $C$ can be represented by the pair $(x,0)$.
%(or just by the singleton $x$). 
To simulate the arithmetic operations in $C$
we use the following rules: 
$(\alpha_1-\alpha_2)+(\beta_1-\beta_2)=(\alpha_1+\beta_1)-(\alpha_2+\beta_2)$;
$(\alpha_1-\alpha_2)-(\beta_1-\beta_2)=(\alpha_1+\beta_2)-(\alpha_2+\beta_1)$;
$(\alpha_1-\alpha_2)\times(\beta_1-\beta_2)=
(\alpha_1\times \beta_1+\alpha_2\times\beta_2)-(\alpha_2\times \beta_1+\alpha_1\times\beta_2)$.
A straightforward induction shows that the gates in a pair 
$(\alpha_1,\alpha_2)$ will have same formal degree as the gate $\alpha$ that
they represent. Finally, to complete the construction of $C'$ we come back to our first idea: if $(\alpha_1,\alpha_2)$ is the pair representing the
output gate of $C$, we write the difference $\alpha_1 - \alpha_2$ as 
$\alpha_1 +(-1)\times \alpha_2$. This increases the formal degree by 1.
Each arithmetic operation in $C$ is simulated by at most 6 operations in $C'$, 
and we need 3 additional gates to perform the final subtraction.
\end{proof}
This modest increase in the formal degree  cannot be avoided: 
without subtraction gates there is no better way to compute the polynomial
$f(x)=-x$ than by the formula $f(x)=-1 \times x$, which is of formal 
degree 2.

Finally we define the notion of arithmetic branching program. 
This is an edge-weighted directed acyclic graph with two distinguished vertices $s$ and~$t$. The output of the branching program is by definition equal to the sum of the weights of all paths from $s$ to $t$, where the weight of a path is the product of the weights of its edges. In this paper we assume that the edge weights are constants from some field $K$ or variables. Like an arithmetic circuit, a branching program therefore represents a polynomial with
coefficients in~$K$.
The depth of a branching program is the length (in number of edges) of the longest path from $s$ to $t$.
The term {\em arithmetic} (or {\em algebraic}) {\em branching program} goes back at least to~\cite{Nisan91,BM99} but these objects were used implicitly much earlier, for instance in~\cite{Valiant79}.
Skew circuits, weakly skew circuits and arithmetic branching programs are
essentially equivalent models. Indeed, as shown in~\cite{KaKoi08} 
they simulate each other with only linear overhead (see~\cite{Jansen08} for
the multilinear case).

\section{From Circuits to Branching Programs} \label{tobp}

We first recall Lemma~4 from~\cite{MP08}.
\begin{lemma} \label{lemma4}
Let $C$ be a circuit of size $t$ and formal degree $d$, containing only binary unweighted arithmetic gates. There exists a weakly skew circuit 
$C'$ of formal degree $d$ and size at most $t^{\log 2d}$ which computes the same polynomial. 
\end{lemma}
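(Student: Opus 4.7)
The plan is to proceed by strong induction on the formal degree $d$, showing that any circuit with binary unweighted gates of formal degree~$d$ and size~$t$ admits an equivalent weakly skew circuit of formal degree at most~$d$ and size at most $t^{\log 2d}$. The base case $d=1$ is immediate: since every input gate has formal degree~$1$ and any multiplication gate has formal degree at least~$2$, a circuit of formal degree~$1$ contains no multiplication gates at all. It is therefore vacuously weakly skew, and its size $t$ matches $t^{\log 2}=t$.

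For the inductive step I would perform a degree-halving decomposition. Let $C_{>d/2}$ denote the subgraph of $C$ induced by the gates whose formal degree is strictly greater than $d/2$. A short case analysis on the degree rules shows that every internal gate of $C_{>d/2}$ must be an addition gate, because a multiplication $\alpha = \beta\times\gamma$ with both $\beta,\gamma\in C_{>d/2}$ would force $d(\alpha)=d(\beta)+d(\gamma)>d$, contradicting $d(\alpha)\leq d$. Consequently $C_{>d/2}$ is a pure sum-network whose bottom leaves are either genuine inputs to $C_{>d/2}$ of degree $\leq d/2$ or multiplication gates $\alpha = \beta\times\gamma$ with $d(\beta),d(\gamma)\leq d/2$.

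The weakly skew circuit $C'$ is built by keeping the top part $C_{>d/2}$ verbatim and, at each multiplicative leaf $\alpha = \beta\times\gamma$, replacing the product by $W_\beta\times W_\gamma$, where $W_\beta$ and $W_\gamma$ are fresh, pairwise disjoint weakly skew circuits for the subcircuits rooted at $\beta$ and $\gamma$, supplied by the induction hypothesis applied at degree $\leq d/2$. Because $W_\beta$ is a private copy, removing its edge into $\alpha$ disconnects it from the rest of $C'$, so $\alpha$ satisfies the weakly skew condition; the gates above are additions and impose no disjointness requirement. Formal degree is preserved, since $W_\beta\times W_\gamma$ has formal degree at most $d/2+d/2=d$, and an addition-only top part never increases formal degree beyond the maximum of its inputs.

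For the size analysis, there are at most $t$ multiplicative leaves in $C_{>d/2}$, and each of them triggers (at most) two recursive weakly skew subcircuits of size $T(d/2)\leq t^{\log d}$, while the top part contributes at most $t$ further gates. This gives a recurrence of the form $T(d)\leq t+O(t)\cdot T(d/2)$, solving to $T(d)=t^{O(\log d)}$; with a somewhat more careful accounting (in particular, allowing the $W_\beta$-copies to be shared across different leaves so long as the $W_\gamma$-side of each leaf remains private, thereby charging only one fresh recursive call per leaf) one sharpens this to the stated bound $T(d)\leq t^{\log 2d}$. The main delicate point I expect is precisely this duplication bookkeeping: one has to resist the temptation to share recursively constructed subcircuits in a way that would leave some multiplication gate without any subcircuit disjoint from the rest of $C'$, and to check that the reconvergence present in the addition-only top part does not spoil the weakly skew property at the multiplications grafted below it.
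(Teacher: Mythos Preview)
First, note that the paper does not actually prove this lemma: it is quoted from Malod and Portier~\cite{MP08}, and only a description of their construction (multiple copies of each gate, with the wiring pattern preserved) appears later in the paper's proof of Proposition~\ref{tows}. Your degree-halving recursion with private copies is indeed the essence of the Malod--Portier argument.

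There is, however, a real gap in your analysis of the top layer. You conclude that $C_{>d/2}$ is a ``pure sum-network'' whose only multiplication gates sit at the leaves with both arguments of degree at most $d/2$. This does not follow from your case analysis, which only shows that no multiplication gate in $C_{>d/2}$ can have \emph{both} inputs in $C_{>d/2}$. A gate $\alpha=\beta\times\gamma$ with, say, $d(\beta)>d/2$ and $d(\gamma)\le d/2$ lies in $C_{>d/2}$, has its input $\beta$ still in $C_{>d/2}$, and is therefore neither an addition gate nor a leaf in your sense. If you ``keep $C_{>d/2}$ verbatim'' and only graft private copies at the bottom leaves, such an $\alpha$ is left as a multiplication gate whose two subcircuits (the one through $\beta$ inside the top layer, and the one for $\gamma$ shared with whatever else uses $\gamma$) are both entangled with the rest of the construction, so the result is not weakly skew.

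The repair is minor and is what Malod--Portier actually do: what you have really shown is that $C_{>d/2}$ is \emph{skew} over the low-degree part, i.e.\ every multiplication gate in it has at least one argument of formal degree at most $d/2$. So for \emph{every} multiplication gate in $C_{>d/2}$ (not only the bottom ones) attach a fresh private weakly skew copy for its low-degree argument, obtained from the induction hypothesis at degree $\le d/2$; the other argument, and all low-degree inputs feeding addition gates, may be drawn from a single shared recursive copy. This restores weak skewness at every multiplication, preserves formal degree, and since there are at most $t$ multiplication gates in $C_{>d/2}$ each requiring one private copy of size at most $T(d/2)$, you recover a recurrence of the shape $T(d)\le t\cdot T(d/2)$ and hence the bound $t^{\log 2d}$.
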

 The fact that $C'$ has same formal degree as $C$ is not explicitly stated in~\cite{MP08}, but it can be checked that their construction does satisfy this additional property (more on this in the proof of Proposition~\ref{tows}).
We would like to apply this construction not to $C$ itself, but to a ``normal form'' of $C$ containing weighted addition gates.
%{\tt Rewrite this section with negative weights (for Theorem 5) 
%and arbitrary non-integer weights (for Theorem 4)}.
We begin with an easy lemma.
\begin{lemma} \label{addcircuit}
Let $C$ be a circuit made only of input gates  and (ordinary) addition 
or subtraction gates.
 Each gate of $C$ 
is equivalent to a
weighted addition gate of total weight at most $2^s$, where $s$ is the number
of arithmetic gates in $C$.
\end{lemma}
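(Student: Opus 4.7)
The plan is to proceed by induction on a natural per-gate quantity. For each gate $g$ of $C$, let $s_g$ denote the number of arithmetic gates lying in the subcircuit of $C$ consisting of $g$ together with all of its ancestors in the underlying DAG; trivially $s_g \leq s$. I would prove the stronger statement that every gate $g$ is equivalent to a weighted addition of the input gates of $C$ with total weight at most $2^{s_g}$, from which the lemma follows immediately.

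The base case is $s_g = 0$: then $g$ must itself be an input gate, and is trivially equivalent to a weighted addition gate with the single input $g$ of weight $1 = 2^0$. For the inductive step, let $g$ be an arithmetic gate with predecessors $g_1$ and $g_2$. Since $g$ itself is counted in $s_g$ but not in $s_{g_1}$ or $s_{g_2}$, we have $s_{g_1}, s_{g_2} \leq s_g - 1$. By the induction hypothesis, each $g_i$ will be equivalent to an expression $\sum_j a^{(i)}_j x_j$ (summed over the input gates $x_j$ of $C$) with $\sum_j |a^{(i)}_j| \leq 2^{s_{g_i}}$. Since $g$ computes $g_1 \pm g_2$, it is equivalent to the weighted addition $\sum_j (a^{(1)}_j \pm a^{(2)}_j) x_j$, whose total weight is at most $\sum_j |a^{(1)}_j| + \sum_j |a^{(2)}_j| \leq 2^{s_{g_1}} + 2^{s_{g_2}} \leq 2 \cdot 2^{s_g - 1} = 2^{s_g}$.

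I do not anticipate a substantive obstacle here; the argument is a direct structural induction that uses only the triangle inequality to control the total weight. The one subtlety worth flagging in advance is that $C$ is a DAG rather than a tree, so the subcircuits feeding $g_1$ and $g_2$ may share arithmetic gates, and consequently $s_{g_1} + s_{g_2}$ can exceed $s_g - 1$. This should be harmless: the induction relies only on the individual bounds $s_{g_i} \leq s_g - 1$, which always hold because the gate $g$ itself is counted in $s_g$ but is absent from each $s_{g_i}$.
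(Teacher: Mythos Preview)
Your proposal is correct and follows essentially the same route as the paper: a structural induction in which the total weight at an addition or subtraction gate is bounded by the sum of the weights at its two inputs via the triangle inequality. The only cosmetic difference is that you induct on the per-gate count $s_g$ of arithmetic ancestors (yielding the slightly sharper bound $2^{s_g}$), whereas the paper inducts directly on $s$ by stripping off an output gate; both arguments are the same at heart.
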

\begin{proof}
By induction on $s$. The result is true for $s \leq 1$ since an input gate can be viewed as a unary weighted addition gate of weight~1, and an ordinary addition or subtraction gate as a binary weighted addition gate of weight~2. 
For $s>1$, consider an addition or subtraction gate which is an output of $C$.
By induction hypothesis each of the two inputs of the gate computes
a function of the form $\sum_{i=1}^n a_ix_i$ where $x_1,\ldots,x_n$ are the inputs of $C$ and $\sum_i |a_i| \leq 2^{s-1}$.
Therefore the output gate computes a function of the same form with total 
weight at most $2^s$.
\end{proof}
\begin{lemma} \label{binary2weighted}
Let $C$ be a circuit containing $s$ (weighted) addition gates and $m$ multiplication gates. There is an equivalent circuit $C_{+}$ such that:
\begin{itemize}
\item[(i)] $C_{+}$ contains at most $s$ addition gates and $m$ 
multiplication gates.
\item[(ii)] Any input to an addition gate is an input of $C_{+}$ or the output of a multiplication gate (in other words, the output of an addition gate can be fed only to 
multiplication gates).
\item[(iii)] If all the addition gates of $C$ are ordinary additions or subtractions,  the total weight of 
every addition gate of $C_+$  is at most $2^s$.
\item[(iv)] $C^+$ is of same formal degree as $C$.
\end{itemize}
\end{lemma}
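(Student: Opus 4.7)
The plan is to consolidate every chain of addition gates into a single weighted addition gate, absorbing additive subcomputations into linear combinations of ``atoms'' (input gates and outputs of multiplication gates of $C$).

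\emph{Setup.} Call a gate of $C$ an \emph{atom} if it is an input gate or a multiplication gate. Every addition gate $\alpha$ of $C$ admits a unique decomposition: trace back through the (weighted) addition gates feeding $\alpha$, stopping whenever an atom is reached. This exhibits the polynomial computed at $\alpha$ as a linear combination $\sum_i a_i \beta_i$ of the outputs of certain atoms $\beta_i$ of $C$. Concretely, consider the sub-circuit $C_\alpha$ of $C$ consisting of $\alpha$, all addition gates lying on some path ending at $\alpha$, and the atoms that feed such addition gates (viewed as inputs of $C_\alpha$). Since $C_\alpha$ contains at most $s$ addition gates and no multiplication gates, Lemma~\ref{addcircuit} (or rather its direct generalisation to weighted additions, proved by the same induction) shows that the output of $C_\alpha$ equals a linear combination of its inputs; in the unweighted/subtraction case of hypothesis (iii), the total weight is bounded by $2^s$.

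\emph{Construction of $C_+$.} Keep all input gates and all multiplication gates of $C$, preserving their wiring (except that wherever a multiplication gate of $C$ took its input from an addition gate $\alpha$, we will re-route it to the new gate $\alpha'$ below). For each addition gate $\alpha$ of $C$ whose output feeds at least one multiplication gate or is the output of $C$, introduce a single weighted addition gate $\alpha'$ whose inputs are exactly the atoms $\beta_i$ with weights $a_i$ found above. This gives at most $s$ addition gates and exactly $m$ multiplication gates, establishing~(i). Every input to an addition gate $\alpha'$ of $C_+$ is by construction an atom, i.e.\ an input of $C_+$ or the output of a multiplication gate; this is~(ii). Property~(iii) was noted above.

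\emph{Formal degree.} For~(iv), I would argue by induction along a topological order of the gates that every gate of $C$ and its counterpart in $C_+$ have the same formal degree. For an input gate this is trivial. For a multiplication gate, its inputs in $C_+$ are the images (under the correspondence $\alpha\mapsto\alpha'$, or identity on atoms) of its inputs in $C$, and formal degrees match by induction, so the product of degrees is preserved. For an addition gate $\alpha$, the formal degree in $C$ equals the maximum of the formal degrees of its immediate inputs, and unwinding the sub-circuit $C_\alpha$ (whose gates are all addition gates, none of which changes the maximum) this equals the maximum of the formal degrees of the atoms $\beta_i$ appearing in the decomposition; by induction these equal the formal degrees of the same atoms in $C_+$, whose maximum is precisely the formal degree of $\alpha'$.

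The only genuinely delicate point is bookkeeping: one must choose the decomposition so that $\alpha'$ depends \emph{only} on atoms (never on other addition gates), while simultaneously ensuring that the $2^s$-weight bound from Lemma~\ref{addcircuit} applies to each individual $\alpha'$. Both are handled by working inside the sub-circuit $C_\alpha$, which has $\leq s$ addition gates and no internal multiplication gates, so that Lemma~\ref{addcircuit} applies verbatim.
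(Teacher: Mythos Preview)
Your proof is correct and follows essentially the same approach as the paper: identify the maximal additive sub-circuits feeding into multiplication gates (or the output), collapse each such sub-circuit into a single weighted addition gate whose inputs are the ``atoms'' (inputs of $C$ or multiplication gates), and invoke Lemma~\ref{addcircuit} for the $2^s$ weight bound. Your treatment of property~(iv) is slightly more explicit than the paper's (which just calls the induction ``straightforward''), but the construction and the key ideas are the same.
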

In this lemma and elsewhere in the paper, ``equivalent'' means that $C_{+}$ computes
the same polynomial as $C$.
\begin{proof}[of Lemma~\ref{binary2weighted}]
%We first consider the case where 
%all the addition gates of $C$ are ordinary additions or subtractions.
We will keep the same multiplication gates in $C_{+}$ as in $C$.
Consider a multiplication gate in $C$ having 
at least one addition gate $\gamma$ as an input. We can view $\gamma$
%such an addition gate
as the output of a maximal subcircuit which does not contain any internal
multiplication gate (the inputs to the subcircuit are therefore inputs 
of $C$ or multiplication gates). 
The output of this subcircuit is a linear function of its inputs.
We can therefore replace the subcircuit by a single (weighted) addition gate 
$\gamma'$.
Moreover, in the case where all the addition gates of $C$ are ordinary additions or subtractions, $\gamma'$ 
%this weighted addition gate 
can be taken of weight 
at most $2^s$ by Lemma~\ref{addcircuit}. 

 We perform this replacement simultaenously for all
addition gates of $C$ feeding into a multiplication gate.
If the output of $C$ is a multiplication gate, we are done.
If the output is an addition gate, we likewise replace its maximal subcircuit
by a weighted addition gate.
The resulting circuit $C^+$ satisfies properties (i), (ii) and (iii).

A straightforward induction shows that every multiplication gate $\alpha$ of $C$ has same formal degree as the corresponding gate in $C^+$; and that 
if $\alpha$ has an input $\gamma$ which is an addition gate, the formal degree
of the corresponding gate $\gamma'$ in $C^+$ will be equal to that of $\gamma$.
Hence property (iv) is satisfied as well.
\end{proof}
The same transformation as in Lemma~\ref{lemma4} can be applied to $C_{+}$
instead of $C$. The resulting weakly skew circuit contains weighted 
addition gates.
\begin{proposition} \label{tows}
Let $C$ be a circuit of size $t$ and formal degree $d$ where all multiplication gates are binary.
There exists a weakly skew circuit 
$C'$ of degree $d$ and size at most $t^{\log 2d}$ which computes the same polynomial. 
In $C'$, any input to an addition gate is an input of the circuit or the output of a multiplication gate. 
Moreover, if all the addition gates of $C$ are ordinary additions or subtractions,  the total weight of 
every addition gate of $C'$  is at most $2^t$.
\end{proposition}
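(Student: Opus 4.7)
The plan is to obtain $C'$ by composing two transformations: first pass $C$ through Lemma~\ref{binary2weighted} to produce an equivalent circuit $C_+$ whose addition gates have been ``collapsed'' into weighted addition gates feeding only into multiplication gates, and then apply the Malod--Portier construction of Lemma~\ref{lemma4} to $C_+$ to obtain a weakly skew circuit $C'$ of the same formal degree.

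In more detail, first I would invoke Lemma~\ref{binary2weighted} on $C$. Letting $s$ denote the number of addition gates of $C$ (so $s\le t$), the resulting circuit $C_+$ has at most $s$ weighted addition gates and the same multiplication gates as $C$, so its total number of gates is at most $t$. By clause (iv) of the lemma, $C_+$ has the same formal degree $d$ as $C$. By clause (ii), every addition gate of $C_+$ receives its inputs from circuit inputs or from multiplication gates, which is exactly the structural property we want $C'$ to inherit. By clause (iii), if the addition gates of $C$ were ordinary additions or subtractions, then every addition gate of $C_+$ has total weight at most $2^s\le 2^t$.

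Next I would apply the construction underlying Lemma~\ref{lemma4} to $C_+$. The Malod--Portier procedure recursively duplicates subcircuits at multiplication gates in order to disjointify them, but it never touches the interior structure of an addition subcircuit: each gate is either recopied wholesale or left in place, and no new edges among addition gates are introduced. Consequently, the weighted addition gates of $C_+$ survive in $C'$ with their weights unchanged, so the total-weight bound $2^t$ is preserved. Since $|C_+|\le t$ and the formal degree of $C_+$ equals $d$, the size bound from Lemma~\ref{lemma4} yields $|C'|\le t^{\log 2d}$, and the formal degree of $C'$ is again $d$. Finally, because any input to an addition gate in $C_+$ was already the output of a multiplication gate or an input of the circuit, and because the duplication procedure only replicates subcircuits rooted at multiplication gates, this property carries over verbatim to $C'$.

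The main point to check carefully is the last one: that Lemma~\ref{lemma4}, although stated in~\cite{MP08} only for binary unweighted gates, goes through for our weighted gates without harming either the formal-degree preservation or the structural property on addition-gate inputs. This is really a bookkeeping exercise — the recursion of~\cite{MP08} is driven by the multiplicative structure and the inductive invariant on formal degree, both of which are insensitive to whether the addition gates carry weights or have unbounded fan-in — but it is the only nontrivial verification in the argument, and the statement of the proposition is explicitly set up so that subsequent sections can rely on these extra properties.
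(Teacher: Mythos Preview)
Your proposal is correct and follows essentially the same approach as the paper: first apply Lemma~\ref{binary2weighted} to obtain $C_+$, then run the Malod--Portier construction of Lemma~\ref{lemma4} on $C_+$, verifying that the duplication procedure (which simply creates copies of gates with the same connection pattern and the same weights) preserves properties (ii), (iii) and (iv). The paper's proof is likewise just a sketch pointing to~\cite{MP08} and checking exactly the bookkeeping you identify as the only nontrivial step.
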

\begin{proof}
We only give a sketch since this is really the same construction 
as in Lemma~4 of~\cite{MP08}. 
We briefly explain below why this construction preserves properties
(ii), (iii) and (iv) from Lemma~\ref{binary2weighted}, and refer 
to~\cite{MP08} for more details.
To achieve weak skewness $C'$ contains multiple copies of each gate
of $C^+$. Moreover, the connection pattern of $C^+$ is preserved in the
following sense. If $\alpha'$ is a copy of a multiplication gate
$\alpha$ then its two inputs $\beta'$ and $\gamma'$ are copies of the two
inputs $\beta$ and $\gamma$ of $\alpha$. Likewise, for any addition gate $\alpha$ of $C^+$ the inputs of a copy $\alpha'$  will be copies of its inputs,
 and moreover $\alpha'$ and $\alpha$ will have the same weights 
(\cite{MP08} considers
only unweighted binary addition gates, but the general case is identical).
In particular, $\alpha$ and $\alpha'$
 have same total weight and the inputs to $\alpha'$
are inputs of $C'$ or multiplication gates. A straightforward induction shows
that every gate of $C^+$ has same formal degree as its copies in $C'$.
\end{proof}
\begin{proposition} \label{skew2abp}
Let $C$ be a weakly skew circuit of size $m$ and formal degree $d$, 
with weights of addition gates coming from some set $W$. 
Assume moreover that any input to an addition gate is an input of $C$ or the output of a multiplication gate.
There exists an equivalent arithmetic branching program $G$ of size 
at most $m+1$ and depth at most $3d-1$. The edges of $G$ are labeled by 
inputs of $C$
or constants from $W$.
% If a vertex of $G$ has an incoming edge labeled by a constant then
%the same is true of all its incoming edges, and the sum of their 
%weights is bounded by $B$. 
\end{proposition}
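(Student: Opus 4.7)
The plan is to build $G$ with one vertex $n_v$ for each gate $v$ of $C$ plus one designated sink $t^*$, and then collapse vertices along weakly skew multiplications so that the final count is $m+1$ or fewer. I would assign to each gate $v$ a target vertex $\mathrm{dest}(v) \in V(G)$ and maintain the invariant that the sum of weighted paths from $n_v$ to $\mathrm{dest}(v)$ equals $p_v$. This assignment is propagated top-down from $\mathrm{dest}(v_0) = t^*$ at the output: at an addition gate $v = \sum_i w_i u_i$ every summand inherits $\mathrm{dest}(u_i) = \mathrm{dest}(v)$, and at a multiplication gate $v = \beta \times \gamma$ for which $\beta$'s subcircuit is disjoint from the rest of $C$ I set $\mathrm{dest}(\beta) = n_\gamma$, $\mathrm{dest}(\gamma) = \mathrm{dest}(v)$, and identify $n_v$ with $n_\beta$.

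Edges are placed gate by gate: an input $v$ labeled $x$ contributes the edge $n_v \to \mathrm{dest}(v)$ with label $x$; an addition $v = \sum_i w_i u_i$ contributes edges $n_v \to n_{u_i}$ of weight $w_i \in W$ (and assumption (ii) is essential here since it guarantees each $n_{u_i}$ exists as the vertex of an input or multiplication gate); multiplications contribute no new edges thanks to the identification $n_v = n_\beta$. Every edge label is then visibly an input of $C$ or a constant from $W$, and since each multiplication identification removes one vertex the total is $m_{\mathrm{input}} + m_{\mathrm{add}} + 1 \leq m+1$.

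The technical heart of the argument is that the propagation of $\mathrm{dest}$ is consistent at shared gates, and this is the step I expect to be the main obstacle. If a gate $u$ has fan-out greater than one then several parents $v_1, v_2, \ldots$ independently prescribe values $\mathrm{dest}(u) = \mathrm{dest}(v_i)$, and I must show these agree. Weak skewness forbids any gate $g$ whose subcircuit contains $u$ from being the $\beta$-side of a multiplication unless that subcircuit also contains every parent of $u$: otherwise $u$, lying inside $g$'s would-be disjoint subcircuit, would be used by a parent outside, breaking disjointness. This forces each $v_i$ to propagate its $\mathrm{dest}$ upward until reaching the same enclosing multiplication, making all prescribed values equal.

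Given consistency, correctness is a structural induction: at an addition the paths from $n_v$ decompose as an initial edge of weight $w_i$ followed by a path computing $p_{u_i}$; at a multiplication $v = \beta \times \gamma$, the disjointness of $\beta$'s subcircuit forces every path from $n_v = n_\beta$ to reach $\mathrm{dest}(v)$ through $n_\gamma = \mathrm{dest}(\beta)$, factoring uniquely into parts of weights $\beta$ and $\gamma$. For the depth, let $D(v)$ denote the longest path from $n_v$ to $\mathrm{dest}(v)$; I would prove by structural induction the two bounds $D(v) \leq 3\,\mathrm{fdeg}(v) - 1$ in general and $D(v) \leq 3\,\mathrm{fdeg}(v) - 2$ when $v$ is a multiplication. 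The sharper multiplication bound is exactly what makes the addition recursion close, since by assumption (ii) each summand is either an input (depth $1$) or a multiplication, so the overall depth is at most $3d-1$.
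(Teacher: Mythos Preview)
Your construction is correct and yields essentially the same branching program as the paper's, but the organization differs. The paper proceeds by induction on circuit size: it removes an output gate $\alpha$, applies the induction hypothesis to the remainder (or, when $\alpha=\beta\times\gamma$ is a multiplication with $C_\gamma$ independent, recurses separately into $C_\beta$ and $C_\gamma$), and then either adds a single vertex $t_\alpha$ or glues the two resulting graphs by identifying $t_\beta$ with the source of $G_\gamma$. This recursive decomposition, together with the accompanying notion of \emph{reusable gate}, sidesteps entirely the consistency question you flag as the main obstacle: because the induction enters the independent subcircuit $C_\gamma$ with a fresh source, no two parents ever compete to assign a destination to a shared child. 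Your global top-down definition of $\mathrm{dest}$ is the reverse-oriented, unrolled version of the same idea; your consistency argument is correct (it amounts to observing that independent subcircuits are nested, so the innermost one containing a given gate is well-defined), and your depth invariant $D(v)\le 3\,\mathrm{fdeg}(v)-1$, sharpened to $3\,\mathrm{fdeg}(v)-2$ for inputs and multiplications, is exactly the invariant the paper proves.
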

%\begin{remark}
%As we shall see, vertices of $G$ with incoming edges labeled 
%by constants correspond to addition gates of $C$.
%\end{remark}
\begin{proof}
The construction is  similar to that of (\cite{MP08}, Lemma~5).
%We give it in detail for the reader's convenience.
The main new  point is to check the depth bound. 
Recall from Section~\ref{arith} that for every multiplication gate $\alpha$
in $C$ we have an {\em independent subcircuit} which is connected to the
remainder of $C$ only by the arrow from the subcircuit to $\alpha$.
As in~\cite{MP08} we say that a gate is reusable if it does not belong to 
any independent subcircuit.
Also as in~\cite{MP08}, we will
prove a version of Proposition~\ref{skew2abp}
for circuits with multiple outputs. 

We will show by induction that for any reusable gate $\alpha$ of $C$
there is a vertex $t_{\alpha}$ in $G$ such that the weight of $(s,t_{\alpha})$ 
is the polynomial computed by $\alpha$.
As to the depth, we will show that if $\alpha$ is an addition gate computing 
a polynomial of formal degree $d_{\alpha}$, the
depth of $t_{\alpha}$ in $G$ (the length of the longest path from $s$ to $t_{\alpha}$) is at most $3d_{\alpha}-1$; if $\alpha$ is a multiplication
or input gate, its depth is at most $3d_{\alpha}-2$.

The beginning of the induction is clear: a weakly skew circuit $C$ 
of size $m=1$ %is of formal degree $d=1$, and 
is reduced to a single gate $\alpha$ labeled by some input $x$.
The corresponding graph $G$ has two nodes $s$ and $t$, with an edge
from $s$ to $t$ labeled by $x$. We take of course $t_{\alpha}=t$.
We have $d_{\alpha}=1$, and this gate is indeed at depth $3d_{\alpha}-2=1$.

Consider now a weakly skew circuit $C$ 
of size $m \geq 2$, and let $\alpha$ be one of 
its ouptut gates. Removing $\alpha$ from $C$, we obtain a circuit $C'$
of size $m-1$. By induction hypothesis, there is a corresponding graph
$G'$ of size at most $m$ with a distinguished vertex $s$.

If $\alpha$ is an input gate labeled by $x$, we obtain $G$ by adding 
a vertex $t_{\alpha}$ to $G'$, and an edge from $s$ to $t_{\alpha}$ labeled 
by $x$.

Assume now that $\alpha$ is a (weighted) addition gate, with $k$ 
(distinct) inputs $\alpha_1,\ldots,\alpha_k$. These $k$ gates must
be reusable, so by induction hypothesis we have vertices $t_{\alpha_i}$ 
in $C'$ so that the weight of $(s,t_{\alpha_i})$ is equal to the polynomial
computed by $\alpha_i$. Moreover, since $\alpha$ is an addition gate
the $\alpha_i$ are multiplication or input gates, and are therefore
at depth at most $3d_{\alpha_i}-2 \leq 3d_{\alpha}-2$.
We obtain $G$ by adding a new vertex $t_{\alpha}$ to $G'$, 
and $k$ new edges from the $t_{\alpha_i}$ to $t_{\alpha}$ (labeled by the
same weights as the incoming edges of the addition gate~$\alpha$).
The weight of $(s,t_{\alpha})$ in $G$ is clearly equal to the polynomial
computed by $\alpha$, and $t_{\alpha}$ is at depth at most 
$(3d_{\alpha}-2)+1=3d_{\alpha}-1$.

Assume finally that $\alpha$ is a multiplication gate with inputs 
$\beta$ and $\gamma$. Let $C_{\beta}$ and $C_{\gamma}$ be the corresponding 
subcircuits. Since $C$ is weakly skew, one of the two subcircuits
(say, $C_{\gamma}$) is independent from the rest of $C$. 
Hence $m=m_{\beta}+m_{\gamma}+1$ where $m_{\beta}$ and $m_{\gamma}$ are the sizes of $C_{\beta}$ and $C_{\gamma}$. We can apply separately the induction
hypothesis to $C_{\beta}$ and $C_{\gamma}$. This yields two graphs
$G_{\beta}$ and $G_{\gamma}$ of respective sizes at most $m_{\beta}+1$ and 
$m_{\gamma}+1$, with sources $s_{\beta}$
and $s_{\gamma}$. 
In these graphs there are vertices $t_{\beta}$ and $t_{\gamma}$ such 
that the weight of $(s_{\beta},t_{\beta})$ in $C_{\beta}$ is equal 
to the polynomial computed by gate $\beta$, 
and the weight of $(s_{\gamma},t_{\gamma})$ in $C_{\gamma}$ is equal 
to the polynomial computed by gate $\gamma$.
We construct $G$ from these two graphs by identifying $t_{\beta}$
and $s_{\gamma}$. The source of $G$ is $s=s_{\beta}$.
This graph is of size at most $(m_{\beta}+1)+(m_{\gamma}+1)-1=m \leq m+1$.
In $G$, the vertex associated to gate $\alpha$ will be $t_{\alpha}=t_{\gamma}$.
The weight of $(s,t_{\gamma})$ in $G$ is indeed 
equal to the polynomial computed
by gate $\alpha$. For vertices $v$ in $G_{\gamma}$ the weight of $(s,v)$
in $G$ is {\em not} equal to the weight of $(s_{\gamma},v)$ in $G_{\gamma}$,
but as pointed out in~\cite{MP08} this does not matter since these vertices
correspond to non-reusable gates of $C$.

Let $d$, $d_{\beta}$ and $d_{\gamma}$ be the formal degrees of the circuits
$C$, $C_{\beta}$ and $C_{\gamma}$. 
By induction hypothesis, $t_{\gamma}$ is at depth at most $3d_{\gamma}-1$ in
$G_{\gamma}$, and $t_{\beta}$ is at depth at most $3d_{\beta}-1$ in
$G_{\beta}$. In $G$, $t_{\gamma}$ is therefore at depth 
at most $(3d_{\beta}-1)+(3d_{\gamma}-1)=3d-2$.
\end{proof}
Combining Propositions~\ref{tows} and~\ref{skew2abp} yields the following result.
\begin{theorem} \label{circuit2abp}
Let $C$ be a circuit of size $t$ and formal degree $d$ where all multiplication gates are binary. There is an equivalent 
arithmetic branching program $G$ of size at most $t^{\log 2d}+1$ and depth 
at most $3d-1$.  The edges of $G$ are labeled by 
%the variables $x_1,\ldots,x_n$ (the inputs of $C$) 
%or integer constants bounded by $2^t$.
inputs of $C$ or by constants. 
Moreover, if all the addition gates of $C$ are ordinary additions or subtractions then these constants are integers of absolute value at most $2^t$.
\end{theorem}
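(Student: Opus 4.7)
The proof is essentially a composition of the two preceding propositions, each of which has been set up precisely so that their outputs and inputs dovetail. The plan is the following.

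First, I would apply Proposition~\ref{tows} to $C$. Since $C$ has size $t$, formal degree $d$, and binary multiplication gates, this produces a weakly skew circuit $C'$ of formal degree $d$ and size at most $t^{\log 2d}$ that computes the same polynomial. Crucially, $C'$ already has the structural property needed downstream: every input to an addition gate is either an input of the circuit or the output of a multiplication gate. Moreover, in the case where all addition gates of $C$ are ordinary additions or subtractions, Proposition~\ref{tows} additionally guarantees that each (weighted) addition gate of $C'$ has total weight at most $2^t$.

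Second, I would feed $C'$ into Proposition~\ref{skew2abp}, taking $W$ to be the set of weights occurring on its addition gates. With $m \leq t^{\log 2d}$, this yields an arithmetic branching program $G$ of size at most $t^{\log 2d}+1$ and depth at most $3d-1$, whose edges are labeled by inputs of $C'$ (i.e.\ by inputs of $C$) or by constants drawn from $W$. This already gives the full conclusion of the theorem except for the integer constant bound.

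For the last claim, I would observe that the weights appearing in $C'$ are produced by the construction of Lemma~\ref{addcircuit} as combinations of $\pm 1$ via addition and negation, so they are integers; Proposition~\ref{tows} carries them unchanged into $C'$. If all addition gates of $C$ are ordinary additions or subtractions, each weighted addition gate of $C'$ satisfies $\sum_i |a_i| \leq 2^t$, and hence every individual weight $|a_i|$ is an integer bounded by $2^t$ in absolute value, which is exactly the bound asserted. There is no real obstacle here: the two propositions have been stated so that their hypotheses and guarantees match exactly, and the only small point requiring vigilance is that Proposition~\ref{skew2abp} is applied to a circuit whose addition gates may have arbitrary fan-in and arbitrary integer weights—both of which are explicitly allowed by its statement.
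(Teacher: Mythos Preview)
Your proposal is correct and follows exactly the paper's approach: the paper's proof is the single sentence ``Combining Propositions~\ref{tows} and~\ref{skew2abp} yields the following result,'' and you have simply spelled out this composition in detail, including the small observation that a bound on the total weight $\sum_i |a_i| \le 2^t$ implies the same bound on each individual $|a_i|$.
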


\section{From Branching Programs to Depth-4 Circuits} \label{todepth4}

In this section we complete the reduction to circuits of depth 4.
\begin{lemma} \label{abp2power}
Let $G$ be an arithmetic branching program of size $m$ and depth $\delta$,
with % $e$ 
edges labeled by elements from some set $S$. 
There is an $m \times m$ matrix $M$ such that the polynomial computed by $G$
is equal to the entry at row 1 and column $m$ of the matrix power $M^{p}$,
for any integer $p \geq \delta$.
Moreover, the entries of $M$ are in the set $S \cup \{0,1\}$.
\end{lemma}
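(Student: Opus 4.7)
The plan is to use a (slightly augmented) weighted adjacency matrix of $G$. First, topologically order the vertices of $G$ as $1,2,\ldots,m$ so that $s=1$ and $t=m$; this is possible since $G$ is a DAG and we can arrange for $s$ to be a source and $t$ to be a sink. Define the $m\times m$ matrix $M$ by $M_{i,j}=w(i,j)$ if there is an edge from $i$ to $j$ in $G$ (of weight $w(i,j)\in S$), $M_{m,m}=1$ (an added self-loop at the target vertex $t$), and $M_{i,j}=0$ otherwise. Clearly every entry lies in $S\cup\{0,1\}$.

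The key observation is the standard identity that $(M^p)_{1,m}$ equals the sum, over all length-$p$ walks from vertex $1$ to vertex $m$ in the graph underlying $M$, of the product of the edge weights along the walk. I would then argue that such walks are in bijection with pairs (an $s$-to-$t$ path in $G$, a nonnegative number of additional self-loops at $t$). Indeed, since $t=m$ is a sink in $G$, the only outgoing edge from $m$ in $M$ is the self-loop of weight $1$; hence once a walk reaches $m$ it must remain there, and conversely if we truncate a walk just before its first visit to $m$ we obtain a path in $G$ of length at most $\delta$. The remaining $p-k$ steps are all the weight-$1$ self-loop at $m$, contributing a factor of $1$. Summing, $(M^p)_{1,m}=\sum_\pi w(\pi)$ where the sum ranges over all $s$-to-$t$ paths $\pi$ in $G$, which is exactly the polynomial computed by $G$.

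The only real point to check is that the decomposition above is well-defined for every $p\ge\delta$. The lower bound $p\ge\delta$ ensures that every $s$-to-$t$ path in $G$ (whose length is at most $\delta$) can indeed be padded by $p-k\ge 0$ self-loops to produce a length-$p$ walk; for paths of length exactly $\delta$ we use zero self-loops. Conversely, no length-$p$ walk from $1$ to $m$ can avoid ever reaching $m$ (since then it would be a length-$p$ walk in the DAG part of $G$, impossible once $p$ is large enough, and certainly fine for $p\ge\delta$ as long as at least one $s$-to-$t$ path exists; if none exists, the polynomial computed is $0$ and so is $(M^p)_{1,m}$).

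The main (mild) obstacle is bookkeeping around parallel edges and around the case $s=t$: if $G$ has multiple parallel edges between two vertices one normally sums their weights, which could escape $S$, so I would assume without loss of generality that parallel edges have been eliminated by subdividing edges (this changes $m$ and $\delta$ by only a constant factor, and the subdivision weights can be taken to be $1$, which is in $S\cup\{0,1\}$). The case $s=t$ does not arise in the applications we need, but could be handled by adding a dummy source.
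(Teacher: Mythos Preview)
Your proof is correct and is essentially the same construction as the paper's: topologically order the vertices with $s=1$ and $t=m$, take the weighted adjacency matrix, add a single self-loop of weight~$1$ at $t$, and use the standard interpretation of $(M^p)_{1,m}$ as a sum over length-$p$ walks, which for $p\ge\delta$ collapses to the sum over all $s$--$t$ paths in~$G$.

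A few minor comments. Your third paragraph is unnecessary and slightly garbled: a walk from $1$ to $m$ by definition ends at $m$, so there is nothing to check about ``avoiding reaching $m$''; the real content (that the walk decomposes as a DAG path of length $\le\delta$ followed by self-loops) is already in your second paragraph. The paper does not raise the parallel-edge or $s=t$ issues at all---it tacitly assumes a simple DAG, which is what its earlier constructions produce---so your final paragraph, while not wrong, is addressing a concern outside the scope of the lemma as stated and used. If you do want to handle parallel edges by subdivision, note that this can increase $m$ by the number of edges (not just a constant factor), though it only doubles~$\delta$.
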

\begin{proof}
Fix a topological ordering of the nodes of $G$, with the source $s$ labeled
1 and the target $t$ labeled $m$. We define $M$ as the adjacency matrix
of the graph $G'$ obtained from $G$ by adding a loop of weight 1 on vertex 
$t$. In other words, $M_{mm}=1$ and in all other cases $M_{ij}$ is the
(possibly null) weight from node $i$ to node $j$ of $G$. Note that $M$ 
is upper-diagonal, with all diagonal entries equal to 0 except $M_{mm}$.
It follows from the classical relation between matrix powering and paths
in graphs that $(M^p)_{1m}$ is equal to the sum of weights of all $st$-paths
of length exactly $p$ in $G'$. This is also the sum of weights of all $st$-paths of length at most  $p$ in $G$, and for $p \geq \delta$ this is the
output of the arithmetic branching program.
\end{proof}
Note that for $p \geq \delta$ all entries of $M^p$ except $(M^p)_{1m}$ 
are equal to zero.

In the last step in our series of reduction, we explain 
(following basically the same strategy as in~\cite{AgraVinay08})
how to perform the matrix powering operation in the above lemma with 
depth four formulas, and also depth four circuits.
\begin{proposition} \label{abp2depth4}
Let $G$ be an arithmetic branching program of size $m$ and depth $\delta$.
There is an equivalent depth four circuit $\Gamma$ with $m^2+1$ unweighted addition gates
and $m^{\lceil\sqrt{\delta}\rceil+1}+m^{\lceil\sqrt{\delta}\rceil-1}$ multiplication gates.
There is also an equivalent depth four formula $\Gamma_f$ 
with $m^{\lceil\sqrt{\delta}\rceil-1}+1$ unweighted addition gates 
and  $m^{\lceil\sqrt{\delta}\rceil-1}+m^{2\lceil\sqrt{\delta}\rceil-2}$ 
multiplication gates.

The inputs  of $\Gamma$ and $\Gamma_f$ are from the set as the edge labels of $G$, and their multiplication gates are of fan-in $\lceil \sqrt \delta \rceil$.
\end{proposition}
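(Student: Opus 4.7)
The plan is to invoke Lemma~\ref{abp2power} to reduce the evaluation of $G$ to computing $(M^p)_{1m}$ for some convenient $p\ge\delta$, and then to express matrix powering as a depth-4 $\Sigma\Pi\Sigma\Pi$ formula by splitting the exponent into two pieces of size roughly $\sqrt{\delta}$. Concretely, set $k=\lceil\sqrt{\delta}\rceil$ and take $p=k^{2}\ge\delta$, so that Lemma~\ref{abp2power} says the polynomial computed by $G$ equals $(M^{k^{2}})_{1m}$, where $M$ is the $m\times m$ matrix of that lemma (with entries in $S\cup\{0,1\}$).

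The key identity is $M^{k^{2}}=(M^{k})^{k}$, which yields the two-level path expansion
\[
(M^{k^{2}})_{1m}=\sum_{j_{1},\ldots,j_{k-1}\in[m]}\ \prod_{l=1}^{k}(M^{k})_{j_{l-1},j_{l}},\qquad j_{0}=1,\ j_{k}=m,
\]
and each $M^{k}$ entry expands as
\[
(M^{k})_{a,b}=\sum_{i_{1},\ldots,i_{k-1}\in[m]}\ M_{a,i_{1}}M_{i_{1},i_{2}}\cdots M_{i_{k-1},b}.
\]
Substituting the second expression into the first gives a $\Sigma\Pi\Sigma\Pi$ description of the output in which every multiplication has fan-in exactly $k=\lceil\sqrt{\delta}\rceil$ and the inputs are the chosen entries of $M$, i.e.\ elements of $S\cup\{0,1\}$. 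This is exactly the depth-4 shape we want.

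To obtain the circuit $\Gamma$, I would share the computation of the $m^{2}$ entries of $M^{k}$: each entry contributes one depth-2 addition gate of fan-in $m^{k-1}$ together with $m^{k-1}$ depth-1 multiplication gates of fan-in $k$, totaling $m^{2}$ additions and $m^{k+1}$ multiplications; on top of this the outer sum adds $1$ addition gate and $m^{k-1}$ depth-3 multiplications of fan-in $k$, giving $m^{2}+1$ addition gates and $m^{k+1}+m^{k-1}$ multiplication gates. To obtain the formula $\Gamma_{f}$, I cannot share, so the construction instead keeps a single depth-2 sum and a single block of $m^{k-1}$ depth-1 multiplications per depth-3 multiplication gate, while feeding the remaining inputs of each depth-3 multiplication directly from input gates; this collapses the duplication and gives $m^{k-1}+1$ addition gates and $m^{k-1}+m^{2k-2}$ multiplication gates, with fan-in $k$ throughout.

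The routine parts are the algebraic identity and the verification that all multiplication gates receive $k$ inputs. The main step requiring care is the formula count: one has to arrange the substitution so that each $(M^{k})$-subformula is instantiated exactly once rather than $k$ times, which is what drives the exponent $m^{k-1}$ (rather than $k\,m^{k-1}$) in the number of depth-2 additions and $m^{2k-2}$ (rather than $k\,m^{2k-2}$) in the number of depth-1 multiplications. A secondary point is that $p=k^{2}$ rather than $\delta$ itself is harmless because Lemma~\ref{abp2power} is valid for every $p\ge\delta$, so we may pad the exponent up to a perfect square without changing the output polynomial.
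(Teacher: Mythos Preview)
Your approach is exactly the paper's: apply Lemma~\ref{abp2power} with $p=\lceil\sqrt{\delta}\rceil^{2}$, write $M^{p}=(M^{k})^{k}$, and expand each level as a sum of length-$k$ products. Your count for the circuit $\Gamma$ is correct and matches the paper's argument verbatim.

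The gap is in your formula count. You correctly notice that a naive expansion gives $k\cdot m^{k-1}$ depth-2 additions and $k\cdot m^{2k-2}$ depth-1 multiplications rather than the stated $m^{k-1}$ and $m^{2k-2}$, and you propose to fix this by ``feeding the remaining inputs of each depth-3 multiplication directly from input gates.'' That does not work: each of the $k$ inputs to a depth-3 multiplication gate is an entry $N_{a,b}=(M^{k})_{a,b}$, which is a genuine sum of $m^{k-1}$ length-$k$ products of entries of $M$, not an input of the circuit. There is no way to replace $k-1$ of these by leaves while still computing the correct polynomial, so the factor $k$ in the formula is not removed by your construction.

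For what it is worth, the paper's own proof has the same slip: it asserts ``$m^{\sqrt{p}-1}$ addition and $m^{2(\sqrt{p}-1)}$ multiplication gates in the two bottom levels'' without accounting for the fact that each depth-3 product has $\sqrt{p}$ inputs, each requiring its own fresh $\Sigma\Pi$ subformula in a formula. The honest bound for $\Gamma_f$ carries an extra factor of $\lceil\sqrt{\delta}\rceil$ in both the addition and bottom-multiplication counts; this is harmless for the $n^{O(\sqrt{d}\log d)}$ bounds in Theorems~\ref{circuit2depth4} and~\ref{vp2depth4}, but the exact counts in the proposition appear to be understated.
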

\begin{proof}
We need to compute $M^p$, where $p \geq \delta$ and $M$ is 
as in Lemma~\ref{abp2power}. Let $p$ be the smallest square integer bigger or equal to $\delta$. From $M$ we will compute $N=M^{\sqrt{p}}$ 
by a depth 2 circuit $\Gamma_2$,
and then from $N$ we will compute $M^p=N^{\sqrt{p}}$ using the same circuit.
With a depth 2 circuit one cannot play clever tricks: we can only 
expand a polynomial as a sum of monomials. 
In this case we express each entry of $N$ as a sum of $m^{\sqrt{p}-1}$
products of length $\sqrt{p}$, by brute-force expansion of the product 
$M^{\sqrt{p}}$ . This yields a circuit $\Gamma_2$
with $m^2$ addition gates (one for each entry of $N$) and $m^{\sqrt{p}+1}$
multiplication gates. We can double those estimates to upper bound
the size of $\Gamma$. To arrive at the slightly better estimate in 
the statement of Proposition~\ref{abp2depth4}, note that the second copy
of $\Gamma_2$ only needs to compute a single entry of $N^{\sqrt{p}}$.

In order to obtain an arithmetic formula, we recompute from scratch each
entry of $N$ whenever it is used by the second copy of $\Gamma_2$.
The arithmetic formula therefore computes a sum of $m^{\sqrt{p}-1}$ products,
where each product is a sum of $m^{\sqrt{p}-1}$ products of entries of $M$.
We therefore have one addition and $m^{\sqrt{p}-1}$ products gates in the
top two levels,  $m^{\sqrt{p}-1}$ addition and $m^{2(\sqrt{p}-1)}$ multiplication gates in the two bottom levels.
\end{proof}
Note the significant saving in the number of addition gates if we use
depth four circuits instead of depth four formulas.
We can now prove our main depth reduction result.
%{\tt Rewrite this theorem with subtraction gates (for Theorem 5) 
%and weighted unary gates (for Theorem 5). Done ?}
\begin{theorem} \label{circuit2depth4}
Let $C$ be an arithmetic circuit of size $t$ and formal degree $d$ where all multiplication gates are binary.
There is an equivalent depth four circuit $\Gamma$ with at most 
$(t^{\log 2d}+1)^2+1$ unweighted addition gates and at most 
$2(t^{\log 2d}+1)^{\sqrt{3d}+2}$ multiplication gates.

There is an equivalent arithmetic formula $\Gamma_f$ of depth four with at most
$(t^{\log 2d}+1)^{\sqrt{3d}}+1$ unweighted addition gates and at most
$2(t^{\log 2d}+1)^{2\sqrt{3d}}$ multiplication gates.
The inputs of $\Gamma$ and $\Gamma_f$ 
are  inputs of $C$ or constants;
%integer constants bounded by $2^t$; 
their multiplication gates are of fan-in at most $\sqrt{3d}+1$.

If $C$ is constant-free, and if all the addition gates of $C$ are ordinary additions or subtractions, then these constants are integers of absolute value at most $2^t$.
\end{theorem}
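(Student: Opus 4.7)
The plan is to derive Theorem~\ref{circuit2depth4} by composing the two main tools already developed in Sections~\ref{tobp} and~\ref{todepth4}: first reduce $C$ to an arithmetic branching program $G$ via Theorem~\ref{circuit2abp}, then apply Proposition~\ref{abp2depth4} to $G$ to obtain the depth~4 circuit $\Gamma$ and formula $\Gamma_f$. No new construction is needed; the whole proof is a careful bookkeeping of parameters through the two reductions.

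First I would apply Theorem~\ref{circuit2abp} to $C$, obtaining an equivalent branching program $G$ of size $m \leq t^{\log 2d}+1$ and depth $\delta \leq 3d-1$. Next, feeding $G$ into Proposition~\ref{abp2depth4} yields $\Gamma$ and $\Gamma_f$ with the stated gate counts in terms of $m$ and $\delta$. It then remains to massage the exponents to match the theorem statement. The key inequality is $\lceil \sqrt{\delta}\rceil \leq \lceil \sqrt{3d-1}\rceil \leq \sqrt{3d}+1$, from which one derives $\lceil \sqrt{\delta}\rceil+1 \leq \sqrt{3d}+2$, $\lceil \sqrt{\delta}\rceil-1 \leq \sqrt{3d}$, and $2\lceil \sqrt{\delta}\rceil-2 \leq 2\sqrt{3d}$. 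Substituting $m \leq t^{\log 2d}+1$ into the bounds from Proposition~\ref{abp2depth4}, using $x^a+x^b \leq 2x^{\max(a,b)}$ to collapse the two multiplication-gate terms, gives exactly the claimed counts $(t^{\log 2d}+1)^2+1$ and $2(t^{\log 2d}+1)^{\sqrt{3d}+2}$ for $\Gamma$, and $(t^{\log 2d}+1)^{\sqrt{3d}}+1$ and $2(t^{\log 2d}+1)^{2\sqrt{3d}}$ for $\Gamma_f$. The multiplication gate fan-in is $\lceil\sqrt{\delta}\rceil \leq \sqrt{3d}+1$ by the same inequality.

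For the last sentence, I would track the edge labels of $G$ through the reductions. By Theorem~\ref{circuit2abp}, when $C$ is constant-free with only ordinary additions and subtractions, the edges of $G$ carry either an input of $C$ or an integer of absolute value at most $2^t$. Lemma~\ref{abp2power} then constructs $M$ whose entries lie in this set together with $\{0,1\}$, and by Proposition~\ref{abp2depth4} the inputs of $\Gamma$ and $\Gamma_f$ are drawn from precisely these entries. Since $\{0,1\}$ is contained in the set of integers of absolute value at most $2^t$ (for $t \geq 0$), the claim follows.

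The proof contains no real obstacle beyond the bookkeeping; the only slightly delicate point is handling the ceilings inside the exponents, and the cleanest route is to fix the comparison $\lceil \sqrt{3d-1}\rceil \leq \sqrt{3d}+1$ once and then use it uniformly throughout the substitution.
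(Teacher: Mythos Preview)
Your proposal is correct and follows exactly the paper's approach: the paper's proof consists of the single line ``Combine Theorem~\ref{circuit2abp} and Proposition~\ref{abp2depth4},'' and you have supplied the (correct) parameter bookkeeping that this line leaves implicit. The ceiling inequality you isolate and the tracking of edge labels for the constant-free case are precisely what is needed to verify the stated bounds.
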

\begin{proof}
Combine Theorem~\ref{circuit2abp} and Proposition~\ref{abp2depth4}.
\end{proof}
\begin{remark} \label{highdepth}
We can obtain smaller circuits for $C$ by going for a constant depth larger
than four. Let $M$ be the matrix in the proof of Proposition~\ref{abp2depth4}.
To compute a power $M^p$ we can start from $M$ and raise   
our matrix to the power $\sqrt[\Delta]{p}$ repeatedly ($\Delta$ times).
If we implement each of the $\Delta$ powerings by a depth 2 circuit, 
we obtain for the branching program $G$ a circuit of depth $2\Delta$ 
and size $m^{O(\sqrt[\Delta]{p})}$, for any constant $\Delta \geq 2$.
For $C$, this translates into a circuit of depth $2\Delta$ and size 
$t^{O(\sqrt[\Delta]{d} \log d)}$.
\end{remark}
%\begin{corollary}
%Any $\vp^0$ family $(f_n)$ can be computed by a family $(\Gamma_n)$ 
%of depth four formulas of size 
%\end{corollary}
If we start from arithmetic formulas (or more generally weakly skew circuits)
 instead of general arithmetic circuits,
we can obtain depth four formulas and circuits of smaller
size than in Theorem~\ref{circuit2depth4}. 
Indeed, in this case 
we do not need the transformation from arithmetic circuits to 
weakly skew circuits given by Proposition~\ref{tows}.
% and we can start instead directly from Proposition~\ref{skew2abp}.
This saves a factor of roughly $\log 2d$ in the exponents of our complexity bounds.
\begin{theorem}
Let $C$ be a weakly skew circuit of size $t$ and formal degree $d$.
There is an equivalent depth four circuit $\Gamma$ with at most 
$(t+1)^2+1$ unweighted addition gates and at most 
$2(t+1)^{\sqrt{3d}+2}$ multiplication gates.

There is an equivalent arithmetic formula $\Gamma_f$ of depth four with at most
$(t+1)^{\sqrt{3d}}+1$ unweighted addition gates and at most
$2(t+1)^{2\sqrt{3d}}$ multiplication gates.
The inputs of $\Gamma$ and $\Gamma_f$ 
are  inputs of $C$ or constants;
%integer constants bounded by $2^t$; 
their multiplication gates are of fan-in at most $\sqrt{3d}+1$.

If $C$ is constant-free, and if all the addition gates of $C$ are ordinary additions or subtractions, then these constants are integers of absolute value at most $2^t$.
\end{theorem}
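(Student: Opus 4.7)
The plan is to mirror the proof of Theorem~\ref{circuit2depth4}, but to bypass the application of Proposition~\ref{tows} since $C$ is already weakly skew; this is exactly what saves the factor of $\log 2d$ in the exponents. The pipeline is therefore: normalize $C$ so that the hypothesis of Proposition~\ref{skew2abp} is satisfied, convert the normalized circuit to a small-depth arithmetic branching program, then invoke Proposition~\ref{abp2depth4}.

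First I would apply Lemma~\ref{binary2weighted} to $C$ to obtain an equivalent circuit $C_+$ of size at most $t$ and formal degree $d$ in which every addition gate is fed only by inputs of $C$ or by outputs of multiplication gates (property~(ii) of the lemma), which is precisely the extra assumption required by Proposition~\ref{skew2abp}. Property~(iv) preserves the formal degree. In the constant-free case with ordinary additions or subtractions, property~(iii) further guarantees that each weighted addition gate of $C_+$ has total weight at most $2^s \leq 2^t$; this is the integer bound that will propagate through to the constants of the final depth-4 circuit.

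The only genuine point to verify (and the main, though mild, obstacle) is that $C_+$ inherits weak skewness from $C$. Since Lemma~\ref{binary2weighted} leaves the multiplication gates of $C$ untouched and only compresses maximal addition subcircuits into single weighted addition gates, a disjoint input subcircuit of a multiplication gate $\alpha$ in $C$ becomes, in $C_+$, the same collection of multiplication gates together with contracted addition gates. The ``no external connections'' property survives this compression because every gate of the original disjoint subcircuit had all of its outputs inside that subcircuit, so the contractions happen entirely within it; in particular no edge crosses the boundary of the subcircuit that was not already the single edge going into $\alpha$.

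Once $C_+$ is in hand, Proposition~\ref{skew2abp} yields an arithmetic branching program $G$ of size at most $t+1$ and depth at most $3d-1$, with edge labels drawn from the inputs of $C$ together with the weights appearing in $C_+$. Finally I would apply Proposition~\ref{abp2depth4} to $G$ with $m = t+1$ and $\delta = 3d-1$. Using the trivial inequality $\lceil\sqrt{3d-1}\rceil \leq \sqrt{3d}+1$, the gate counts and fan-in bounds given by Proposition~\ref{abp2depth4} collapse directly to $(t+1)^2+1$ unweighted additions and $2(t+1)^{\sqrt{3d}+2}$ multiplications for $\Gamma$, and $(t+1)^{\sqrt{3d}}+1$ unweighted additions and $2(t+1)^{2\sqrt{3d}}$ multiplications for $\Gamma_f$, with multiplication fan-in at most $\sqrt{3d}+1$. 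The bound $2^t$ on the absolute value of integer constants is preserved because the matrix-powering reduction merely reuses the edge labels of $G$ as inputs of the depth-4 circuit.
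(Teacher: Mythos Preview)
Your proposal is correct and follows exactly the paper's approach: normalize via Lemma~\ref{binary2weighted} (noting that weak skewness is preserved), then apply Proposition~\ref{skew2abp} followed by Proposition~\ref{abp2depth4}. The paper's proof is terser and merely asserts the preservation of weak skewness parenthetically, whereas you spell out why the contraction of addition subcircuits cannot destroy the disjointness of an independent subcircuit; otherwise the arguments are identical.
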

\begin{proof}
Before applying Proposition~\ref{skew2abp} we make sure that any input to
an addition of gate of $C$ is an input of the circuit or a multiplication 
gate. By Lemma~\ref{binary2weighted} this condition can be ensured
without increasing the size of $C$ (and this transformation preserves
weak skewness). Hence there is an equivalent arithmetic branching program of 
size at most $t+1$ and depth at most $3d-1$.
Then we convert this branching program 
into a depth 4 circuit or a depth 4 formula using Proposition~\ref{abp2depth4}.

When $C$ is constant free, the bound on the absolute value of the constants
of $\Gamma$ and $\Gamma_f$ comes (as in Theorem~\ref{circuit2depth4})
from property~(iii) in Lemma~\ref{binary2weighted}.
\end{proof}
The savings in the number of addition gates 
in depth four circuits compared to depth four formulas 
are especially significant 
in the above theorem:  our circuits contain only quadratically many
addition gates. This is a relevant parameter since the 
number of addition gates (minus 1) is equal to the number of {\em distinct}
sparse polynomials in a sum of products of sparse polynomials~\cite{Koi10a}.

\section{Depth Reduction for $\vp$} \label{vp}

In accordance with Definition~\ref{weighted},
a unary weighted addition gate outputs~$\alpha \cdot  x$, 
where $\alpha$ is the weight of the gate and $x$ its input. 
Recall also from the definition of formal degree in Section~\ref{arith}
that the formal degree of such a gate is equal to that of its 
input.

The following result is essentially Lemma 2 from~\cite{Malod03},
 written in a different language. We give the proof because 
we will build on it in the next section.
\begin{proposition} \label{vpdegred}
Any $\vp$ family $(f_n)$ can be computed by a polynomial-size family $(C_n)$ 
of circuits of formal degree $\deg(f_n)$. 
The addition gates of $C_n$ are unary weighted
or binary unweighted (i.e., ``ordinary'').
\end{proposition}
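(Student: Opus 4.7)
The plan is to perform a standard homogenization. Starting from a $\poly(n)$-size circuit $C$ computing $f_n$ (of polynomially bounded degree $d=\deg(f_n)$, by definition of $\vp$), I would replace each gate $\alpha$ of $C$ by $d+1$ new gates $\alpha^{(0)},\alpha^{(1)},\ldots,\alpha^{(d)}$, where $\alpha^{(i)}$ is designed to compute the degree-$i$ homogeneous component of the polynomial at $\alpha$. The final output of the new circuit $C_n$ is then the sum $f_n^{(0)}+f_n^{(1)}+\cdots+f_n^{(d)}$, built up with binary unweighted additions.

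The construction rules are natural: for a variable input $x$, set $\alpha^{(1)}=x$ and $\alpha^{(i)}=0$ for $i\neq 1$; for a constant input $c$, set $\alpha^{(0)}=c$ and $\alpha^{(i)}=0$ otherwise; for an addition $\alpha=\beta+\gamma$, set $\alpha^{(i)}=\beta^{(i)}+\gamma^{(i)}$; for a multiplication $\alpha=\beta\cdot\gamma$, set $\alpha^{(i)}=\sum_{j=0}^{i}\beta^{(j)}\gamma^{(i-j)}$, the inner sum being built with binary unweighted additions. Each gate of $C$ is replaced by $O(d)$ new gates and each multiplication of $C$ contributes $O(d)$ convolution operations, so the resulting circuit has size $O(td^2)$, which is polynomial in $n$.

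The main obstacle is enforcing \emph{formal} degree $d$, rather than just true degree $d$, since a careless implementation of the convolution rule inflates the formal degree through multiplications by constants. Here the unary weighted addition gates are essential. Observe first that each $\alpha^{(0)}$ is a constant (the value of $\alpha$ in $C$ with all variables set to $0$), so I would precompute all of the $\alpha^{(0)}$ and treat them as scalars rather than as gates of the new circuit. In the convolution $\alpha^{(i)}=\sum_{j=0}^{i}\beta^{(j)}\gamma^{(i-j)}$ the boundary terms $\beta^{(0)}\gamma^{(i)}$ and $\beta^{(i)}\gamma^{(0)}$ then become scalar multiples of $\gamma^{(i)}$ and $\beta^{(i)}$, which I would implement as unary weighted addition gates of weights $\beta^{(0)}$ and $\gamma^{(0)}$; their formal degree is that of their single input, namely $i$, rather than $i+1$. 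The interior terms $\beta^{(j)}\gamma^{(i-j)}$ for $1\le j\le i-1$ are honest binary multiplications of formal degrees $j$ and $i-j$, producing formal degree $i$. Summing all these with binary unweighted additions preserves the maximum, and a straightforward induction on gates of $C$ yields that $\alpha^{(i)}$ has formal degree exactly $i$; in particular the top-level sum has formal degree $d$.

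The one delicate point to watch is the bookkeeping around these degree-zero pieces: a constant must never be left as a conventional input gate feeding directly into a multiplication, since by definition any input has formal degree $1$ and such a multiplication would immediately inflate the formal degree beyond $\deg(f_n)$. Funnelling every constant into the weight of a unary weighted addition gate is precisely what the statement of the proposition permits, and is exactly what makes the formal-degree accounting close. Once this is done, the only addition gates introduced are the binary unweighted ones used to accumulate convolutions and the final output sum, plus the unary weighted ones that absorb constants, as required.
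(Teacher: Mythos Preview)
Your proposal is correct and is essentially the same argument as the paper's: homogenize, observe that the degree-zero components $\alpha^{(0)}$ are scalars, and absorb each product $\alpha^{(0)}\cdot\beta^{(i)}$ into a unary weighted addition gate so that the formal degree of $\alpha^{(i)}$ stays bounded by $i$. One small quibble: your induction should conclude that $\alpha^{(i)}$ has formal degree \emph{at most} $i$ rather than exactly $i$ (e.g.\ when $\alpha^{(i)}$ is the constant $0$), but this is harmless for the conclusion.
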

\begin{proof}
Since $(f_n)$ is in $\vp$, this family can be computed by a family $(C'_n)$
of arithmetic circuits of polynomial size where all the arithmetic gates
are binary unweighted.
To construct $C_n$ from $C'_n$ we use a small variation on the standard
homogenization trick. In order to homogenize $C'_n$ one would normally 
represent each gate $\gamma$ 
%of $C'_n$ 
computing a polynomial $f_{\gamma}$
by a sequence $\gamma_i$ of $d_n+1$ gates, where 
$i$ ranges from 0 to $d_n$ and $\gamma_i$ computes 
the homogenous component of $f_{\gamma}$ of degree $i$.
The homogenous components of degree higher than $d_n$ can be discarded
since they cannot contribute to the final output. This construction 
preserves polynomial circuit size, and each gate now computes
a polynomial of degree at most $d_n$. But formal degree can be higher
due to multiplication by constants 
(i.e., homogenous components of degree 0).

To circumvent this difficulty, we get rid of the gates $\gamma_0$ 
representing homogenous components of degree 0. We will therefore
construct a circuit $C''_n$ which computes the sum of all homogenous
components of $f_n$ of degree at least 1. Our final circuit $C_n$
will then add the output of $C''_n$ to the constant term of $f_n$,
%(call it $c_n$).
at the cost of one additional arithmetic operation.

We will use unweighted addition gates inside $C''_n$.
Indeed, let $\gamma$
be a multiplication gate of $C_n$ with inputs $\alpha$ and $\beta$. 
To obtain $f_{\gamma,i}$, the homogenous component of degree $i$, 
one normally writes $f_{\gamma,i}=\sum_{j=0}^{i} f_{\alpha,j}f_{\beta,i-j}$.
This expression involves $ f_{\alpha,0}$ and $f_{\beta,0}$, which as 
we have said are not represented by any gate of $C''_n$.
Therefore, to compute e.g. $f_{\alpha,0} f_{\beta,i}$,
instead of a multiplication gate we use a unary addition gate with
input $f_{\beta,i}$ and weight  $f_{\alpha,0}$.
A straightforward induction shows that a gate $\gamma_i$ in $C''_n$ will
have formal degree~$i$. As a result, $C''_n$ and $C_n$ will be
of formal degree $d_n$.
\end{proof}

\begin{theorem} \label{vp2depth4}
Let $(f_n)$ be a $\vp$ family of polynomials of degree $d_n=\deg(f_n)$.
This family can be computed by a family $(\Gamma_n)$ of depth four circuits
with $n^{O(\log d_n)}$ addition gates and $n^{O(\sqrt{d_n}\log d_n)}$
multiplication gates.
The family $(f_n)$ can also be computed by a family $(F_n)$ of depth four
arithmetic formulas of size $n^{O(\sqrt{d_n}\log d_n)}$.
The inputs to $\Gamma_n$ and $F_n$ are variables of $f_n$ 
or constants; their multiplication gates are of fan-in at most
$\sqrt{3d_n}+1$.
\end{theorem}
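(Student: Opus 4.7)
The plan is to combine Proposition~\ref{vpdegred} with Theorem~\ref{circuit2depth4} and then substitute the resulting parameters. First, apply Proposition~\ref{vpdegred} to the given $\vp$ family $(f_n)$ to obtain a polynomial-size family $(C_n)$ of arithmetic circuits whose \emph{formal} degree equals $d_n = \deg(f_n)$. Let $t_n = n^{O(1)}$ denote the size of $C_n$. The multiplication gates of $C_n$ are binary (this is preserved by the homogenization argument in the proof of Proposition~\ref{vpdegred}), and the addition gates are either unary weighted or binary unweighted — which is permitted as input to Theorem~\ref{circuit2depth4}, whose sole hypothesis on gate arity is that multiplication gates be binary.

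Next, apply Theorem~\ref{circuit2depth4} to each $C_n$ with parameters $t = t_n$ and $d = d_n$. This directly produces an equivalent depth four circuit $\Gamma_n$ with at most $(t_n^{\log 2d_n}+1)^2 + 1$ unweighted addition gates and at most $2(t_n^{\log 2d_n}+1)^{\sqrt{3d_n}+2}$ multiplication gates, together with an equivalent depth four formula $F_n$ of the analogous sizes. The inputs of both are inputs of $C_n$ (hence variables of $f_n$) or constants, and their multiplication gates have fan-in at most $\sqrt{3d_n}+1$, exactly as required.

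Finally, substitute $t_n = n^{O(1)}$ into these bounds. Since $t_n^{\log 2 d_n} = n^{O(\log d_n)}$, the circuit $\Gamma_n$ has $n^{O(\log d_n)}$ addition gates and $n^{O(\sqrt{d_n}\log d_n)}$ multiplication gates, while the formula $F_n$ has total size $n^{O(\sqrt{d_n}\log d_n)}$, matching the theorem's claims.

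There is essentially no obstacle here beyond bookkeeping, because the two tools have been crafted precisely to compose. The only subtle point is the first step: one cannot feed an arbitrary $\vp$ circuit directly into Theorem~\ref{circuit2depth4}, because a polynomial-size circuit for a polynomial of polynomial degree may a priori have superpolynomial formal degree (e.g.\ from iterated squaring that is never used). Proposition~\ref{vpdegred} is what guarantees $d = d_n$ rather than some larger quantity, and this is what makes $\sqrt{d_n}$ appear in the final exponent instead of an unbounded quantity. Note that we make no claim here on the magnitude of the constants in $\Gamma_n$ and $F_n$, since the unary weighted addition gates produced by Proposition~\ref{vpdegred} forfeit the $2^t$ constant bound of Theorem~\ref{circuit2depth4}; the constant-free analogue requires the separate argument of the next section.
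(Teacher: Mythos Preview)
Your proof is correct and follows exactly the paper's approach: apply Proposition~\ref{vpdegred} to ensure formal degree $d_n$, then apply Theorem~\ref{circuit2depth4} with $t$ polynomial in $n$ and $d=d_n$, and read off the bounds. Your added remarks on why the multiplication gates remain binary and why the unary weighted additions from Proposition~\ref{vpdegred} are compatible with Theorem~\ref{circuit2depth4} (and why they preclude the constant bound) are accurate elaborations of what the paper leaves implicit.
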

\begin{proof}
This is an application of Theorem~\ref{circuit2depth4}: $t$ is polynomial 
in $n$, and by Proposition~\ref{vpdegred} we can take $d=d_n$.
\end{proof}

\section{Depth Reduction for $\vpzero$} \label{vp0}

We first show that a circuit of small size and degree where all inputs are in $\{-1,0,1\}$ cannot compute a large integer.
%Show that $\vpzero$ does not change with or without subtraction gates.
\begin{lemma} \label{constantsize}
Let $C$ be a constant-free and variable-free circuit of size 
$t$ and formal degree $d$ where all arithmetic gates are binary unweighted. The output of $C$ is an integer of absolute value at 
most $2^{td}$.
\end{lemma}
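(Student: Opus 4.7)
The plan is to prove the lemma by structural induction, strengthening the statement to the claim that for \emph{every} gate $g$ of $C$, $|v_g| \leq 2^{t_g d_g}$, where $v_g$ is the integer computed at $g$, $d_g$ is the formal degree of $g$, and $t_g$ is the number of gates in the subcircuit rooted at $g$. The lemma is then obtained by applying the claim to the output gate and using $t_g \leq t$, $d_g = d$.

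The base case is an input gate, where $|v_g| \leq 1 \leq 2 = 2^{1 \cdot 1}$. For the inductive step at a binary addition (or subtraction) gate $g$ with inputs $g_1, g_2$, we have $d_g = \max(d_{g_1}, d_{g_2})$ and $t_g \geq t_{g_i} + 1$ for each $i$, since $g$ lies strictly above its inputs in the DAG. Combining these with $d_{g_i} \geq 1$ gives $t_g d_g \geq (t_{g_i}+1) d_{g_i} \geq t_{g_i} d_{g_i} + 1$, hence $2^{t_{g_i} d_{g_i}} \leq 2^{t_g d_g - 1}$ for each $i$; adding the two inductive bounds then yields $|v_g| \leq |v_{g_1}| + |v_{g_2}| \leq 2 \cdot 2^{t_g d_g - 1} = 2^{t_g d_g}$. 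For a multiplication gate $g = g_1 \cdot g_2$, we have $d_g = d_{g_1} + d_{g_2}$ and $t_g \geq t_{g_i}$ for each $i$, so $t_g d_g \geq t_{g_1} d_{g_1} + t_{g_2} d_{g_2}$, which gives $|v_g| \leq |v_{g_1}| \cdot |v_{g_2}| \leq 2^{t_g d_g}$.

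The point that requires care is the possibility of shared subcircuits: sharing can make $t_g$ strictly smaller than $t_{g_1} + t_{g_2} + 1$, so one cannot argue by simply unrolling the circuit into a formula and counting additions (indeed, the bound $|v| \leq 2^t$ would be false with sharing, as iterated squaring of a small circuit already shows). Using the subcircuit size $t_g$ together with the formal degree $d_g$ provides exactly the slack needed: at an addition gate, the guaranteed inequality $t_g \geq t_{g_i}+1$, amplified by the factor $d_g \geq 1$, absorbs the doubling; at a multiplication gate the bound is automatic because the degrees add. Nothing in the argument uses the unweightedness of the gates beyond the triangle inequality $|v_{g_1} \pm v_{g_2}| \leq |v_{g_1}| + |v_{g_2}|$ and the multiplicativity of absolute value.
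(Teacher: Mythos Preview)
Your proof is correct and follows essentially the same route as the paper's: an induction using the size and formal degree of each subcircuit, handling addition via the extra $+1$ in the size parameter and multiplication via additivity of degrees. The paper phrases the induction on the global size $t$ (bounding both input subcircuits by size $t-1$) rather than tracking $t_g$ per gate, but this is only a cosmetic difference.
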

\begin{proof}
By induction on $t$. For $t=1$ the circuit contains a single input gate, 
which must carry an integer in $\{-1,0,1\}$. The result is therefore 
true for $t=1$. Consider now a circuit $C$ of size $t \geq 2$,
and let $d_1$ and $d_2$ be the formal degrees of the two inputs
to the output gate.
By induction hypothesis these two gates carry integers of absolute
value at most $2^{(t-1)d_1}$ and $2^{(t-1)d_2}$.
If the output gate is an addition we have $d_1, d_2 \leq d$ and 
$C$ therefore computes an integer of absolute value at most 
$2^{(t-1)d}+2^{(t-1)d} \leq 2^{td}$. If the output gate is a multiplication,
we have $d=d_1+d_2$ and $C$ computes  an integer of absolute value at most 
$2^{(t-1)d_1} \times 2^{(t-1)d_2} \leq 2^{td}$.
\end{proof}

\begin{proposition} \label{degred}
Any $\vp^0$ family $(f_n)$ can be computed by a family $(C_n)$ 
of constant-free circuits of polynomial size and formal 
degree $\deg(f_n)$. The arithmetic gates of $C_n$ are binary multiplication,
ordinary addition or subtraction gates.
\end{proposition}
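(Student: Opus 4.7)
My plan is to deduce the proposition from Proposition~\ref{vpdegred} by replacing each unary weighted addition gate it produces by a small constant-free subcircuit of the same formal degree. Applying Proposition~\ref{vpdegred} to the $\vpzero$ family $(f_n)$ first yields a polynomial-size family $(C'_n)$ of circuits of formal degree $d_n=\deg(f_n)$ whose addition gates are either binary unweighted or unary weighted. Each weight $\alpha$ appearing in $C'_n$ is, by construction, the value of some degree-$0$ homogeneous component of an original constant-free circuit for $f_n$; Lemma~\ref{constantsize} therefore guarantees that $|\alpha|\leq 2^{\mathrm{poly}(n)}$, so $\log|\alpha|$ is polynomially bounded in $n$.

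The key step is to simulate a unary weighted gate ``$z\mapsto \alpha z$'' by a constant-free subcircuit that uses only binary unweighted additions and subtractions and whose output has the same formal degree as $z$. I would proceed in two phases. First, compute $y_i=2^i z$ for $i=0,1,\ldots,\lfloor\log_2|\alpha|\rfloor$ by iterated doubling $y_{i+1}=y_i+y_i$; each doubling is an ordinary binary addition whose formal degree equals that of its operands, so $\mathrm{fd}(y_i)=\mathrm{fd}(z)$ for every $i$. Second, using a signed binary expansion $\alpha=\sum_i a_i 2^i$ with $a_i\in\{-1,0,1\}$, accumulate $\alpha z$ in a running sum $S$ that starts from the constant $0$ (an allowed input in a constant-free circuit) and, at step $i$, is updated to $S+y_i$ or $S-y_i$ according to the sign of $a_i$ (and left unchanged when $a_i=0$). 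All gates in this subcircuit have formal degree at most $\mathrm{fd}(z)$, and the subcircuit contains $O(\log|\alpha|)=\mathrm{poly}(n)$ gates.

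Performing this replacement at every unary weighted addition gate of $C'_n$ produces a constant-free circuit $C_n$ of polynomial size and formal degree exactly $d_n$ whose arithmetic gates are binary multiplications, ordinary additions, or subtractions, as required. The remaining verification is routine: no gate in a replacement subcircuit raises the formal degree above $\mathrm{fd}(z)$ (additions and subtractions take maxima, and the seed $0$ has formal degree $1\leq \mathrm{fd}(z)$), and the overall size is polynomial because the size of $C'_n$ and the bound $\log|\alpha|$ are both polynomial in $n$. The one temptation to avoid is negating by multiplying by $-1$, which would inflate the formal degree; the signed binary expansion together with the available subtraction gate circumvents this cleanly, and is in fact the only mildly delicate point in the argument.
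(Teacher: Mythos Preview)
Your approach is essentially the paper's own: both start from a constant-free circuit for $f_n$, apply the homogenization of Proposition~\ref{vpdegred}, and then replace each scalar multiplication by a degree-$0$ component $f_{\alpha,0}$ with a doubling-and-signed-accumulate subcircuit built only from ordinary additions and subtractions, invoking Lemma~\ref{constantsize} to bound the bit-length of the scalar. The paper redoes the homogenization from scratch and folds the replacement into it; you instead quote Proposition~\ref{vpdegred} as a black box and post-process its output, which is a perfectly legitimate and slightly more modular packaging of the same idea.

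There is, however, one small gap. The construction in Proposition~\ref{vpdegred} does not only introduce unary weighted addition gates; its final step adds the constant term $c_n$ of $f_n$ to the output of the homogenized subcircuit $C''_n$, and $c_n$ enters as a constant \emph{input} gate, not as a weight. Since $c_n$ need not lie in $\{-1,0,1\}$, after you have eliminated all the weighted additions your circuit is still not constant-free. The fix is immediate and uses the very tool you already have: Lemma~\ref{constantsize} bounds the bit-length of $c_n$ (set all variables to $0$ in the original constant-free circuit), so you can build $c_n$ from the input $1$ by the same doubling-and-signed-accumulation routine, with formal degree $1$ and no multiplication gates. This is exactly how the paper handles $c_n$; once you add this step, your argument is complete and matches the paper's.
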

\begin{proof} 
Since $(f_n)$ is in $\vp^0$, this family can be computed by a family $(C'_n)$
of constant-free circuits of polynomial size and polynomial formal degree.
All the arithmetic gates of $C'_n$ can be assumed to be binary unweighted.
To construct $C_n$ from $C'_n$ we proceed along the same lines as in Proposition~\ref{vpdegred}.
In particular, we will again construct a circuit $C''_n$ 
which computes the sum of all homogenous
components of $f_n$ of degree at least 1. Our final circuit $C_n$
then adds the output of $C''_n$ to the constant term of $f_n$ (call it
$c_n$).
By Lemma~\ref{constantsize}, $c_n$ has polynomial bit size
(it is equal to the output of $C'_n$ when all variables are set to 0).
We can therefore compute $|c_n|$ 
from scratch using a sequence of multiplications
by 2 and additions of bits.
We use an addition to perform a multiplication by 2, so this construction
does not require any multiplication gate. 
%If $c_n<0$, we need a subtraction gate to compute $c_n=0-|c_n|$.
Finally, depending on the sign of $c_n$ we add or subtract $|c_n|$ to the output of $C''_n$.
The resulting circuit $C_n$ will have same formal 
degree as $C''_n$.

We also need to use a similar trick inside $C''_n$. Indeed, let $\gamma$
be a multiplication gate of $C_n$ with inputs $\alpha$ and $\beta$. 
To obtain $f_{\gamma,i}$, the homogenous component of degree $i$, 
one normally writes $f_{\gamma,i}=\sum_{j=0}^{i} f_{\alpha,j}f_{\beta,i-j}$.
This expression involves $ f_{\alpha,0}$ and $f_{\beta,0}$, which as 
explained in the proof of Proposition~\ref{vpdegred} are not represented by any gate of $C''_n$.
Therefore, to compute e.g. $f_{\alpha,0} f_{\beta,i}$ we start from 
$f_{\beta,i}$ and compute the product using a sequence of multiplications
by 2 and additions of $f_{\beta,i}$. As explained above, 
thanks to Lemma~\ref{constantsize} this can be done with a polynomial
number of addition gates, at most one subtraction and no multiplication gate. 
A straightforward induction shows that a gate $\gamma_i$ in $C''_n$ will
have formal degree $i$. As a result, $C''_n$ and $C_n$ will be 
of formal degree $d_n$.
\end{proof}
%\begin{remark} \label{sub}
By Proposition~\ref{nosub}, one can get rid of the subtraction gates in Proposition~\ref{degred} at the cost of a linear increase in circuit size
and an increase in the formal degree by just 1
(using Lemma 3 from~\cite{Malod03} instead of Proposition~\ref{nosub}
would give a worse degree bound).

%\end{remark}

\begin{theorem} \label{vpzero2depth4}
Let $(f_n)$ be a $\vpzero$ family of polynomials of degree $d_n=\deg(f_n)$.
This family can be computed by a family $(\Gamma_n)$ of depth four circuits
with $n^{O(\log d_n)}$ addition gates and $n^{O(\sqrt{d_n}\log d_n)}$
multiplication gates.
The family $(f_n)$ can also be computed by a family $(F_n)$ of depth four
arithmetic formulas of size $n^{O(\sqrt{d_n}\log d_n)}$.
The inputs to $\Gamma_n$ and $F_n$ are variables of $f_n$ 
or relative integers 
of polynomial bit size; their multiplication gates are of fan-in at most
$\sqrt{3d_n}+1$.
\end{theorem}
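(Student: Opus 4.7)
The plan is to mirror the proof of Theorem~\ref{vp2depth4} almost verbatim, but starting from the constant-free normalization of Proposition~\ref{degred} rather than from Proposition~\ref{vpdegred}. The payoff for working in $\vpzero$ instead of $\vp$ will appear in the last step, where the constants in the resulting depth-4 object are controlled via the last clause of Theorem~\ref{circuit2depth4}.

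First I would apply Proposition~\ref{degred} to the family $(f_n)$. This produces a family $(C_n)$ of constant-free circuits of polynomial size $t_n = \mathrm{poly}(n)$ and formal degree exactly $d_n = \deg(f_n)$, whose arithmetic gates are binary multiplications, ordinary additions, and subtractions. This is precisely the class of circuits to which Theorem~\ref{circuit2depth4} can be applied while retaining its last clause (no transformation to eliminate subtractions is needed, since that clause already permits subtraction gates).

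Next I would feed $C_n$ into Theorem~\ref{circuit2depth4}. With $t = t_n$ polynomial in $n$ and $d = d_n$, the bound $t^{\log 2d}$ becomes $n^{O(\log d_n)}$, so the resulting depth-4 circuit $\Gamma_n$ has at most $(n^{O(\log d_n)}+1)^2+1 = n^{O(\log d_n)}$ unweighted addition gates and at most $2(n^{O(\log d_n)}+1)^{\sqrt{3d_n}+2} = n^{O(\sqrt{d_n}\log d_n)}$ multiplication gates. Similarly, the depth-4 formula version has $n^{O(\sqrt{d_n}\log d_n)}$ addition gates and $n^{O(\sqrt{d_n}\log d_n)}$ multiplication gates, hence total size $n^{O(\sqrt{d_n}\log d_n)}$. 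The fan-in bound $\sqrt{3d_n}+1$ on multiplication gates and the fact that the inputs are variables of $f_n$ or constants are immediate from Theorem~\ref{circuit2depth4}.

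The only non-routine step is the bound on the size of the constants appearing as inputs of $\Gamma_n$ and $F_n$. Here the hypotheses of the last clause of Theorem~\ref{circuit2depth4} are met: $C_n$ is constant-free, and all of its addition gates are ordinary additions or subtractions. Consequently the constants that appear in $\Gamma_n$ and $F_n$ are integers of absolute value at most $2^{t_n}$, and since $t_n$ is polynomial in $n$ these integers have polynomial bit size, which is exactly the input-constant condition claimed in the theorem. The main obstacle is thus already dispatched by the careful bookkeeping built into the earlier propositions; essentially nothing new needs to be proved here beyond stitching together Proposition~\ref{degred} and Theorem~\ref{circuit2depth4}.
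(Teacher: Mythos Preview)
Your proposal is correct and follows exactly the paper's approach: apply Proposition~\ref{degred} to obtain constant-free circuits of polynomial size and formal degree~$d_n$, then invoke Theorem~\ref{circuit2depth4} (including its final clause for the constant-size bound). The paper's own proof is the one-line version of precisely this argument.
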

\begin{proof}
This is an application of Theorem~\ref{circuit2depth4}: $t$ is polynomial 
in $n$, and by Proposition~\ref{degred} we can take $d=d_n$.
\end{proof}
%Pour VP: travailler avec le degre formel 
%plutot que le degre formel complet ?
%Better: use unary weighted addition gates 
%(i.e., ``scalar multiplication gates'') 
%to represent multiplication by constants 
%(does not increase complete formal degree) $\rightarrow$ 
%we count non-scalar  multiplications ?

\section{Application to Boolean Circuits}

In this section we give an application of our results to boolean circuit
complexity. A discussion of depth reduction in the boolean versus arithmetic 
setting can already be found in ~\cite{AgraVinay08}, but that paper
did not actually provide any result of this type.
Here we use arithmetic techiques to reprove a known result~:
 languages in  $\logcfl$ have nontrivial constant-depth circuits.
\begin{proposition} \label{logcfl}
Let $L$ be a languange in $\logcfl$. For every $\epsilon>0$, $L$ can be decided
by a family of constant-depth circuits $\Gamma_n$ of size $2^{n^{\epsilon}}$.
The gates of $\Gamma_n$ are OR or AND gates, both of unbounded fan-in, 
and NOT gates.
\end{proposition}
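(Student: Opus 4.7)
The plan is to reduce a $\logcfl$ language to a small-formal-degree arithmetic circuit over the Boolean semiring, apply the constant-depth reduction from Remark~\ref{highdepth}, and interpret the resulting circuit back in the boolean setting.

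First I would invoke the classical characterization $\logcfl = \mathsf{SAC}^1$: every $L \in \logcfl$ is decided by a uniform family of boolean circuits of polynomial size $t = n^{O(1)}$ and depth $O(\log n)$, with OR gates of unbounded fan-in, AND gates of fan-in $2$, and negations applied only to the inputs. Pushing negations to the leaves by De~Morgan, the inputs become literals $x_i$ or $\neg x_i$. Now view this boolean circuit as an arithmetic circuit over the Boolean semiring $(\{0,1\},\vee,\wedge)$, with OR in the role of addition and AND in the role of multiplication. All multiplication gates are binary, and since the depth is $O(\log n)$ and the AND fan-in is $2$, the formal degree is at most $2^{O(\log n)} = n^{c}$ for some constant $c$.

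Next I would apply Remark~\ref{highdepth} (which builds on Theorem~\ref{circuit2abp} and Proposition~\ref{abp2depth4}): for any fixed integer $\Delta \ge 2$ there is an equivalent circuit of depth $2\Delta$ and size $t^{O(d^{1/\Delta}\log d)}$. With $t,d \le n^{O(1)}$, this becomes $2^{O(n^{c/\Delta}(\log n)^{2})}$. Given $\epsilon > 0$, choosing the constant $\Delta$ large enough that $c/\Delta < \epsilon$ gives size at most $2^{n^{\epsilon}}$ for all sufficiently large~$n$.

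Finally, I would reinterpret the resulting depth-$2\Delta$ arithmetic circuit as a boolean circuit, reading each addition gate as an OR and each multiplication gate as an AND (both of unbounded fan-in in a constant-depth circuit), with inputs being the literals $x_i$ or $\neg x_i$. The main point to verify -- and the main, though essentially routine, obstacle -- is that every step of the depth reduction used by Remark~\ref{highdepth} is compatible with the Boolean semiring: the reduction to a weakly skew circuit (Proposition~\ref{tows}), the arithmetic branching program simulation (Proposition~\ref{skew2abp}), the matrix-powering encoding (Lemma~\ref{abp2power}), and the brute-force expansion of matrix products (Proposition~\ref{abp2depth4}) all use only sums and products and never introduce subtractions or negative constants when the source circuit does not contain any. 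The weighted addition gates that some intermediate constructions produce collapse to ordinary unbounded-fan-in ORs, since every weight in the semiring is $1$. Once this is checked, the reinterpreted circuit is a boolean circuit of depth $2\Delta$, size $2^{n^{\epsilon}}$, using unbounded-fan-in OR and AND gates together with negations at the inputs, which proves the proposition.
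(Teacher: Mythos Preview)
Your proposal is correct and follows essentially the same approach as the paper: invoke the $\mathsf{SAC}^1$ characterization of $\logcfl$, view the circuit over the Boolean semiring to get polynomial size and polynomial formal degree, apply Remark~\ref{highdepth} with a suitably large constant~$\Delta$, and reinterpret the result as a boolean circuit. Your write-up is in fact more explicit than the paper's sketch in two respects---you spell out the choice of~$\Delta$ and you note that the intermediate constructions (Propositions~\ref{tows}, \ref{skew2abp}, \ref{abp2depth4} and Lemma~\ref{abp2power}) go through over a semiring because they never use subtraction---whereas the paper simply asserts that ``the main results and their proofs apply to semirings.''
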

\begin{proof}[Sketch]
It is known that languages in $\logcfl$ can be recognized by families
$(C_n)$ of semi-unbounded circuits of logarithmic depth and polynomial
size~\cite{Ven91}. Each circuit $C_n$ has $2n$ inputs; the remaining gates are
AND gates of fan-in 2 or OR gates of unbounded fan-in.
A language $L$ in $\logcfl$ is recognized by the corresponding circuit 
family in the following sense: a word $x \in \{0,1\}^n$ belongs to $L$
iff the input $x_1\ldots x_n \overline{x_1} \ldots \overline{x_n}$
is accepted by $C_n$.

We view $C_n$ as an arithmetic circuit over the boolean semiring
 ${\cal R}=(\{0,1\},\vee,\wedge)$: the boolean OR is the addition of $\cal R$,
and the boolean AND is its multiplication. 
The semi-unboundedness property together with the $O(\log n)$ depth bound 
imply that $C_n$ is of polynomially bounded formal degree.
It follows that we can apply the results of Section~\ref{todepth4} 
(up to now we have considered only arithmetic circuits over fields,
but the main results and their proofs apply to semirings).
The existence of a suitable constant-depth circuit family $(\Gamma_n)$ 
therefore follows from Remark~\ref{highdepth}. 
Note that the depth of $\Gamma_n$ depends on the exponent in the polynomial
bound for the formal degree of $C_n$.
\end{proof}
\begin{remark}
Instead of working over the semiring $\cal R$ in the above proof, 
one could also work over
$(\nn,+,\times)$. To do this replace each OR gate of $C_n$ by a $+$ gate and each AND  gate by a $\times$ gate; apply Remark~\ref{highdepth} to 
the resulting circuit; and finally convert back addition gates into 
OR gates and multiplication gates into AND gates.
\end{remark}
One can find in Lemma~8.1 of~\cite{AHMPS} 
a proof of Proposition~\ref{logcfl} for languages in $\nl$
(a subset of $\logcfl$), and the authors observe that the proof also
applies to $\logcfl$. 
 According to~\cite{Viola09}, the
result for $\nl$ is usually credited to~Nepomnjascii~\cite{Nep70}.
Nepomnjascii proved a uniform version of this result which
in recent years has been used in time-space lower bounds (see~\cite{Melk07}
for a survey on this topic).
The result for languages in $\mathsf L$ was used in~\cite{GV04} to construct
certain uniform families of expanders.

Another depth reduction result due to Valiant shows that 
boolean circuits of linear size and depth $O(\log n)$ have depth-3 
circuits of size $2^{O(n/ \log \log n)}$.
This result is stated in~\cite{Valiant83} for monotone circuits. 
The statement for non-monotone circuits 
(and a proof based on~\cite{Valiant77,Valiant83}) 
can be found in~\cite{Viola09}.
All these results suggest that lower bounds on the size of
circuits of logarithmic depth might be obtained by proving strong enough lower 
bounds for constant-depth circuits (and quite possibly explain why
it is difficult to obtain very strong lower bounds for constant-depth circuits).

\section{Reduction to Polylogarithmic Depth} \label{polylog}

It was shown by Valiant, Skyum, Berkowitz and Rackoff~\cite{VSBR83} 
that arithmetic circuits of polynomially bounded  size and degree
can be transformed into circuits of polylogarithmic depth and polynomial size
(the depth can even be made logarithmic with addition gates of unbounded 
fan-in).
Since then several refinements of this fundamental result have been published, adressing in particular the issues of uniformity~\cite{MRK86,AJMV98} 
or multilinearity~\cite{RY08}.
In this section we give another proof of reduction to polylogarithmic depth.
The depth bound that we obtain is worse than~\cite{VSBR83} 
by a logarithmic factor. This result is therefore not new neither optimal, 
but nonetheless we feel that it is worth presenting here because its proof
is quite simple and based on the same tools as the remainder of the paper:
(weakly) skew circuits and arithmetic branching programs.

Before turning to general arithmetic circuits, we first parallelize arithmetic 
branching programs.
\begin{proposition} \label{abp2logdepth}
Let $G$ be a (multi-output) arithmetic branching program of size $m$ and depth $\delta$.
There is a multi-output arithmetic circuit $C$ of depth 
$2\lceil \log \delta \rceil$
which computes the $m$ polynomials represented by the $m$ nodes of~$G$.
The circuit contains $m^3  \lceil \log \delta \rceil$ 
binary multiplication gates and $m^2 \lceil \log \delta \rceil$ 
addition gates of unbounded fan-in.
\end{proposition}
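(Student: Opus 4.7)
The plan is to reuse the matrix-powering viewpoint from Section~\ref{todepth4}, but to replace the brute-force expansion of $M^p$ by a sequence of $\lceil\log\delta\rceil$ iterated squarings, each realized in depth~$2$. Let $M$ be the $m\times m$ adjacency matrix of $G$, rows and columns indexed according to a topological ordering with $s$ first. I will rely on the elementary identity
\[
\sum_{k=0}^{2^K-1} M^k \;=\; \prod_{i=0}^{K-1}\bigl(I + M^{2^i}\bigr),
\]
which holds in any (semi)ring because every $k\in\{0,\ldots,2^K-1\}$ has a unique binary expansion. For $K=\lceil\log\delta\rceil$ the left-hand side equals $I+M+\cdots+M^{\delta}$, and its row indexed by $s$ is exactly the vector $(f_v)_{v\in V}$ of polynomials represented by the nodes of~$G$.

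To compute that row I would maintain in parallel the two sequences
\[
N_0=M,\quad N_{i+1}=N_i\cdot N_i,\qquad w_0^{T}=e_s^{T},\quad w_{i+1}^{T}=w_i^{T}+w_i^{T}\cdot N_i.
\]
An easy induction gives $N_i=M^{2^i}$ and $w_i^{T}=e_s^{T}\sum_{k=0}^{2^i-1}M^k$, so that $w_K$ is the desired output vector. At each step, the entries of $N_{i+1}$ and $w_{i+1}$ are produced from the entries of $N_i$ and $w_i$ by one layer of binary multiplication gates (the $m^3$ products $(N_i)_{jl}(N_i)_{lk}$ together with the $m^2$ products $(w_i)_{j}(N_i)_{jk}$) followed by one layer of unbounded-fan-in addition gates ($m^2$ gates of fan-in $m$ to assemble $N_{i+1}$, and $m$ gates of fan-in $m+1$ to assemble $w_{i+1}$). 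The two updates share no multiplication gates but can be laid out in the same two depth levels.

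Iterating for $K=\lceil\log\delta\rceil$ steps yields a circuit of depth exactly $2\lceil\log\delta\rceil$ with $K(m^3+m^2)$ multiplication gates and $K(m^2+m)$ addition gates, whose leading terms give the bound in the statement (the vector-update contributions are dominated by the matrix-squaring costs). The only non-bookkeeping point is the binary-expansion identity displayed above; it is what lets us compute $\sum_{k<2^K}M^k$ by a straight sequence of squarings without introducing the binomial-coefficient weights that would appear if we naively raised $I+M$ to the power $2^K$, and hence makes the construction work over an arbitrary coefficient (semi)ring. I do not foresee a significant obstacle beyond this identity, and expect the rest of the argument to be routine induction and gate-counting.
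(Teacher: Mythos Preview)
Your approach and the paper's are both built on iterated squaring, but you handle the crucial step more carefully. The paper sets $M=A+I$ (with $A$ the adjacency matrix of~$G$) and asserts that $(M^p)_{1j}$ equals the polynomial at node~$j$ for every $p\ge\delta$, then computes $M^{2^K}$ by $K=\lceil\log\delta\rceil$ squarings. Over a general ring that assertion is false: $(A+I)^p=\sum_{k}\binom{p}{k}A^k$, so each $s$--$j$ path of length~$\ell$ is weighted by $\binom{p}{\ell}$ rather than by~$1$ (already with a single edge $s\to j$ of weight~$x$ one gets $(M^p)_{1j}=px$). Your telescoping identity $\sum_{k<2^K}A^k=\prod_{i<K}(I+A^{2^i})$ is exactly what is needed to avoid this overcounting, and maintaining the pair $(N_i,w_i)$ realises the product in depth~$2$ per stage. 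In that sense your argument is the paper's intended proof with the binomial-coefficient issue you explicitly flag actually repaired; the paper's bare squaring of $I+A$ gives the right first row only over an idempotent semiring.

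Two small points remain. First, $K=\lceil\log\delta\rceil$ is off by one when $\delta$ is a power of~$2$: then $2^K-1=\delta-1$ and paths of length exactly~$\delta$ are missed. Taking $K=\lceil\log(\delta+1)\rceil$ fixes this at the cost of at most one extra stage. Second, your gate counts are $(m^3+m^2)K$ and $(m^2+m)K$, slightly above the stated $m^3K$ and $m^2K$; since the proposition feeds only into the $O(t^3\log t\cdot\log d)$ size bound of the subsequent theorem, the discrepancy is immaterial.
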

\begin{proof}
It is again based on matrix powering. We start from the adjacency matrix 
of $G$, and add the identity matrix (instead of a single 1 on the diagonal
as in the proof of Lemma~\ref{abp2power}). Let $M$ be the resulting matrix.
Assuming again that the source node of $G$ is labeled 1, the polynomial
represented by node~$j$ of $G$ is equal to $(M^p)_{1j}$ for any power 
$p \geq \delta$. We will compute $M^p$ by repeated squaring. 
From $M$ we can compute $M^2$ by a depth 2 circuit with $m^3$ multiplication
gates and $m^2$ unbounded additions. We repeat this process 
$\lceil \log \delta \rceil$ times to obtain $M^p$.
\end{proof}

\begin{theorem}
Let $C$ be a circuit of size $t$ and formal degree $d$ where all multiplication gates are binary. 
There is an equivalent circuit $C'$ (with binary multiplication gates as well) 
of depth   $O(\log t \cdot \log d )$ and size $O(t^3 \log t \cdot \log d)$
\end{theorem}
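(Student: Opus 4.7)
The plan is to chain together the three main technical tools developed earlier in the paper: Proposition~\ref{tows} (arbitrary circuit to weakly skew circuit), Proposition~\ref{skew2abp} (weakly skew circuit to arithmetic branching program), and Proposition~\ref{abp2logdepth} (ABP to low-depth circuit via matrix repeated squaring). This is the same pipeline that led to Theorem~\ref{circuit2depth4}, except that the two-level matrix powering that there gave depth four is replaced by the full $O(\log \delta)$-squaring construction of Proposition~\ref{abp2logdepth}, trading depth for size.

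Concretely, given $C$ of size $t$ and formal degree $d$, I would first apply Proposition~\ref{tows} to obtain an equivalent weakly skew circuit of formal degree $d$, then Proposition~\ref{skew2abp} to convert it to an arithmetic branching program $G$ of depth $\delta \leq 3d-1$, and finally Proposition~\ref{abp2logdepth} to parallelize $G$ into a circuit of depth $2\lceil \log \delta \rceil = O(\log d)$ by repeated squaring of the associated $m \times m$ matrix. Since the statement requires binary multiplication gates (and, for consistency with the rest of the paper, I should similarly control addition fan-in), I would then replace each unbounded-fan-in addition gate produced by Proposition~\ref{abp2logdepth} by a balanced binary tree of ordinary addition gates, adding a factor of $O(\log m) = O(\log t)$ to the depth and yielding the claimed $O(\log t \cdot \log d)$ bound.

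The hard part will be controlling the size. The direct pipeline above produces an intermediate weakly skew circuit of size $t^{\log 2d}$ via Proposition~\ref{tows}, and this size then enters the $m^3$ factor of Proposition~\ref{abp2logdepth}, giving a circuit of size $t^{O(\log d)}$ — quasi-polynomial in $t$, rather than the polynomial $O(t^3 \log t \cdot \log d)$ in the statement. To reach genuinely polynomial size I expect one must replace the one-shot application of Proposition~\ref{tows} by an iterative construction of $O(\log d)$ stages, each halving the current formal degree and adding only $O(\log t)$ depth and $O(t^3 \log t)$ size via one round of matrix squaring on an object of size $O(t)$. The main technical obstacle is designing a recursive invariant on which polynomials are available at each stage, so that the $\log 2d$ factor in the exponent of Proposition~\ref{tows} becomes additive rather than multiplicative across the $O(\log d)$ stages.
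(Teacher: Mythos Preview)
You have correctly diagnosed the obstacle: the naive pipeline through Proposition~\ref{tows} blows the size up to $t^{O(\log d)}$, and you are right that the fix must have $O(\log d)$ stages each contributing $O(\log t)$ depth and $O(t^3\log t)$ size. But the concrete idea you are missing is simpler than the ``recursive invariant for iterated Proposition~\ref{tows}'' you are reaching for, and in fact it avoids Proposition~\ref{tows} altogether. The paper partitions the gates of $C$ into layers $C_i$ consisting of all gates whose formal degree lies in $[2^i,2^{i+1})$, for $i=0,\ldots,\lfloor\log d\rfloor$. The crucial observation is that each $C_i$, viewed as a circuit whose inputs are the gates from lower layers, is automatically \emph{skew}: a multiplication gate with both arguments in $C_i$ would have formal degree at least $2\cdot 2^i=2^{i+1}$ and hence could not itself belong to $C_i$. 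So there is no need to invoke the costly circuit-to-weakly-skew transformation; each layer is already (weakly) skew of size at most $t$, hence by the standard simulation it yields an arithmetic branching program of size at most $t+1$. Applying Proposition~\ref{abp2logdepth} to each layer gives depth $O(\log t)$ and size $O(t^3\log t)$ per layer, and stacking the $1+\lfloor\log d\rfloor$ layers gives the claimed bounds.

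A minor remark on your depth accounting: the theorem only asserts binary \emph{multiplication} gates, so the unbounded-fan-in additions produced by Proposition~\ref{abp2logdepth} need not be expanded into binary trees. With the layer decomposition above, the depth $O(\log t)$ per layer comes directly from the $2\lceil\log\delta\rceil$ bound in Proposition~\ref{abp2logdepth} (the ABP depth $\delta$ is at most its size $t+1$), not from converting additions to binary.
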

\begin{proof}
We decompose $C$ in ``layers'' $C_i$: $C_i$ is made of all gates of $C$ 
of formal degree in the interval $[2^i,2^{i+1}[$. Here $i$ ranges from 0 to 
$\lfloor \log d \rfloor$. Each layer forms a (multi-output) arithmetic circuit;
for $i \geq 1$, the input gates of $C_i$ actually belong to previous 
$C_j$'s for various $j <i$. The crucial observation is that these 
arithemetic circuits are all skew, i.e., for each mutiplication gate
at least one of the two arguments is an input gate of $C_i$.
Indeed, the product of two  gates of formal degree at least $2^i$ is of
formal degree at least $2^{i+1}$ and therefore cannot belong to $C_i$.
But (as pointed out at the end of Section~\ref{arith}) skew circuits and arithmetic branching programs are essentially
equivalent objects. In particular, by Lemma~5 of~\cite{MP08} 
a skew circuit (or even a weakly skew circuit) 
of size $s$ can be simulated by an arithmetic branching program of size 
$s+1$ (this result of~\cite{MP08} is stated only for circuits with binary addition gates, but the proof clearly applies to unbounded fan-in 
as well\footnote{We gave in Proposition~\ref{skew2abp} a variation on this result.}). By Proposition~\ref{abp2logdepth} each $C_i$ is therefore 
equivalent to a circuit of depth $O(\log t)$ and size $O(t^3 \log t)$.
We multiply these estimates by $1+\lfloor \log d \rfloor$ to obtain the final result.
\end{proof}
%\begin{remark}
%The depth estimate for $C_i$ can be replaced by $O(\log t)$ 
%and the size estimate by $O(t^3 \log t)$. 
%This translates into the alternative estimates
%$O(\log t \cdot \log d )$ and $O(t^3 \log t \cdot \log d)$ for the depth
%and size of $C$.
%\end{remark}

\section*{Acknowlegments} I thank Eric Allender, Bruno Grenet, Natacha Portier and Amir Yehudayoff 
 for useful discussions on
this work. %Proposition~\ref{degred}. % and Remark~\ref{sub}.

%\bibliographystyle{plain}
%\bibliography{biblio}

\end{document}